\let\csname equation*\endcsname\relax
\let\csname endequation*\endcsname\relax
\newtheorem{mydef}{Definition}
\newtheorem{mythe}{Theorem}
\begin{document}

\title[Quantum Algorithm for Online Convex Optimization]{Quantum Algorithm for Online Convex Optimization}

\author[1]{Jianhao He$^1$, Feidiao Yang$^2$, Jialin Zhang$^2$, Lvzhou Li$^{1,*}$}

\address{$^1$ Institute of Quantum Computing and Computer Science Theory, School of  Computer Science and Engineering, Sun Yat-sen University, Guangdong Guangzhou, China}
\address{$^2$ Institute of Computing Technology, Chinese Academy of Sciences, Beijing, China}
\address{$^*$Author to whom any correspondence should be addressed.}
\ead{lilvzh@mail.sysu.edu.cn}
\vspace{10pt}
\begin{indented}
\item[]July 2021
\end{indented}

\begin{abstract}
We explore whether quantum advantages can be found  for the zeroth-order online convex optimization problem, which is also known as bandit convex optimization with multi-point feedback.
In this setting, given access to zeroth-order oracles (that is, the loss function is accessed as a black box that returns the function value for any queried input), a player attempts to minimize a sequence of adversarially generated convex loss functions.
This procedure can be described as a $T$ round iterative game between the player and the adversary.
In this paper, we present quantum algorithms for the problem and show for the first time that potential quantum advantages are possible for problems of online convex optimization.
Specifically, our contributions are as follows. (i) When the player is allowed to query zeroth-order oracles $O(1)$ times in each round as feedback, we give a quantum algorithm that achieves $O(\sqrt{T})$ regret without additional dependence of the dimension $n$, which outperforms the already known optimal classical algorithm only achieving $O(\sqrt{nT})$ regret.
Note that the regret of our quantum algorithm  has achieved the lower bound of classical first-order methods.  (ii) We show that for strongly convex loss functions, the quantum algorithm can achieve $O(\log T)$ regret with $O(1)$ queries as well, which means that the quantum algorithm can achieve the same regret bound as the  classical algorithms in the full information setting.
\end{abstract}

\vspace{2pc}
\noindent{\it Keywords}: online convex optimization, bandit convex optimization, multi-point bandit feedback, quantum optimization algorithms, query complexity

%
%
%
%
%

\section{Introduction}
\label{intro}

Convex optimization is a basic foundation for artificial intelligence, particularly for  machine learning.
While many ingenious algorithms have been developed for convex optimization problems \cite{Boyd2004,Bubeck15}, people still hunger for more efficient solutions in the era of big data.
Since quantum computing exhibits advantages over classical computing \cite{Shor1999,Grover1996,Harrow2009,Harrow2017}, people seek to employ  quantum computing techniques to accelerate the optimization process. 
On the one hand,  combinatorial optimization was shown to be  acceleratable  by using quantum techniques such as Grover's algorithm or quantum walks \cite{Grover1996,Ambainis2006,Durr2006,Durr1996,Mizel2009,Yoder2014,Sadowski2015,He2020}.
On the other hand, in the last few years, some significant quantum improvements were achieved for convex optimization in linear programming \cite{KP18,Li2019,Van2019}, second-order cone programming \cite{Kerenidis2019s,Kerenidis2019Svm,Kerenidis2019P}, quadratic programming \cite{kerenidis2020}, polynomial optimization \cite{Rebentrost2019}, and  semi-definite optimization \cite{KP18,Joran19,Fernando17,Fernando19,Joran17}.
Note that they are all special cases of convex optimization. In the last two years, quantum algorithms for general convex optimization were studied \cite{DeWolf18,Childs18}, where
the main  technique  used is the quantum gradient estimation algorithm \cite{Jordan05,Gilyen2019}.
Note that the studies mentioned above focus on improving offline optimization with quantum computing techniques. Recently, people began to consider applying quantum computing methods to online optimization problems. In 2020, two related results were given \cite{casale2020, wang2020}, where quantum algorithms for the best arm identification problem, a central problem in multi-armed bandit, were proposed. In the same year, the initial version of this paper was uploaded to arXiv \cite{he2020online}. The online problem considered in \cite{casale2020, wang2020} is discrete, while we study a continuous online problem, that is the online convex optimization problem. Besides the different settings, the basic quantum technique they used is the quantum amplitude amplification, while ours is the quantum phase estimation.

\subsection{Online convex optimization}\label{secOCO}

{Online convex optimization (OCO) }is an important framework in online learning, and particularly useful in sequential decision making problems, such as online routing, portfolio selection, and recommendation systems.
Online convex optimization is best understood as a $T$ round iterative game between a player and an adversary.
At every iteration $t$, the player generates a prediction $x_t$ from a fixed and known convex set $\mathcal{K}\subseteq\mathbb{R}^n$.
The adversary observes $x_t$ and chooses a convex loss function $f_t:\mathcal{K}\to \mathbb{R}$.
After that, the player suffers the loss $f_t(x_t)$.
Then, some information about the loss function $f_t$ is revealed to the player as feedback. The goal of the player is to minimize his \textit{regret}, which is defined as
\[
    \sum_{t=1}^T f_t(x_t)-\min_{x^*\in\mathcal{K}}\sum_{t=1}^T f_t(x^*).
\]

A good algorithm/strategy of the player should have a sublinear regret (that is, its regret is sublinear as a function of $T$) since this implies that as $T$ grows, the accumulated loss of the algorithm converges to that with the best fixed strategy in hindsight \cite{Hazan16,lattimore2020b}.

\begin{framed}
The variants of OCO mainly depend on the following four aspects:
\begin{enumerate}
    \item \textbf{The power of the adversary}
    \begin{enumerate}
        \item \textit{Completely adaptive adversary}: A completely adaptive adversary is allowed to choose loss functions $f_t$ after observing the player’s choice $x_t$.
        \item \textit{Adaptive adversary}: An adaptive adversary is limited to choosing the loss function $f_t$ of each round before observing the player’s choice $x_t$.
        \item \textit{Oblivious adversary}: An oblivious adversary is limited to choosing all the loss functions $f_1, f_2, \dots, f_T$ before the game starting.
    \end{enumerate}
    \item \textbf{Feedback}
    \begin{enumerate}
        \item \textit{Full information setting}: After suffering the loss $f_t(x_t)$, $f_t$ is revealed to the player as feedback.
        \item \textit{First-order setting}: After suffering the loss $f_t(x_t)$, a gradient oracle is revealed to the player as feedback, where the player can query the gradient of the loss function.
        \item \textit{Multi-query bandit setting}: After suffering the loss $f_t(x_t)$, a zeroth-order oracle is revealed to the player (i.e, $f_t$ is supplied to  the player as a black box), where the player can query the value of the loss function at more than 1 points as feedback.
        \item \textit{Single-query bandit setting}: After suffering the loss $f_t(x_t)$, only the loss value $f_t(x_t)$ is revealed to the player. This can be seen as $1$-query of a zeroth-order oracle.
    \end{enumerate}
    \item \textbf{The property of loss functions}: $\beta$-smooth, $\alpha$-strongly, exp-concave et al.
    \item \textbf{The property of the feasible set}.
\end{enumerate}
\end{framed}

\subsection{Related work}
For more information about online convex optimization,  one can refer to  Reference \cite{Hazan11}.
Also note that any algorithm for online convex optimization can be converted to an algorithm for stochastic convex optimization with similar guarantees \cite{Hazan2011C,Rakhlin2012,Shamir2013,Shamir2017}, and the quantum state learning problem can be benefited from online convex optimization \cite{aaronson2018,yang2020,chen2020}.
As shown in Table \ref{tab:RB}, below we review some work on online convex optimization that are closely related to the topic considered in this paper, which is organized according to the feedback fashion.
We put the first-order setting and the full information setting together because most of the work in full information setting only use the gradient information.

\noindent\textbf{First-order oracles/Full information.} In 2003, Zinkevich defined the online convex optimization model and showed that the online gradient descent could achieve $O((D^2+G^2)\sqrt{T})$ regret \cite{Zinkevich03}.
By modifying the original proof of \cite{Zinkevich03}, the regret can be improved to $O(DG\sqrt{T})$ by choosing a better learning rate if the diameter $D$ and the Lipschitz constant $G$ are known to the player.
Additionally, it has been known that the lower bound of this setting is $\Omega(DG\sqrt{T})$ (Theorem 3.2 of \cite{Hazan16}).
For strongly convex loss functions, Hazan et al. showed that $O(G^2\log{T})$ regret could be achieved by using online gradient descent \cite{Hazan2007}.
In the same paper, for exp-concave loss functions, Hazan et al. proposed an online Newton method which achieved $O(DGn\log{T})$ regret \cite{Hazan2007}.

\noindent\textbf{Single-query bandit setting. } In those work mentioned above, it is assumed that the player can get full information of loss functions as feedback, or has access to gradient oracles of loss functions.
Contrarily, online convex optimization in the single-query bandit setting was proposed by Flaxman et al. \cite{Flaxman2005}, where the only feedback was the value of the loss and  the adversary was assumed to be oblivious.
Note that in the bandit setting, a regret bound for any strategy against the completely adaptive adversary is necessarily $\Omega(T)$.
Thus, it needs to be assumed that the adversary is adaptive or oblivious, i.e. the adversary must choose the loss function before observing the player's choice or before the game starting, respectively.
The expected regret of Flaxman's algorithm is $O(\sqrt{DGCn}T^{3/4})$, where $C$ is the width of the range of loss functions.
In 2016, the dependence on $T$ was reduced to $O(DGn^{11} \sqrt{T}\log^4{T})$ by Bubeck and Eldan \cite{Bubeck2016} with the price that the dimension-dependence increased to $n^{11}$. In 2017, the dependence on $T$ and $n$ was balanced slightly to $O(DGn^{9.5} \sqrt{T}\log^{7.5} T )$ \cite{bubeck2017}. Recently, Lattimore proved that the upper bound was at most $O(DGn^{2.5} \sqrt{T}\log T )$ \cite{lattimore2020}, which improves the ones by \cite{Bubeck2016,bubeck2017}, but it still had a polynomial dependence on $n$.

\noindent\textbf{Multi-query bandit setting. } Better regret can be achieved if the player can query the value of the loss function at more than $1$ points in each round.
In 2010, Agarwal et al. \cite{Agarwal2010} considered the multi-query bandit setting and proposed an algorithm with an expected regret bound of $O((D^2+n^2G^2)\sqrt{T})$, where the player queries $O(1)$ points in each round.
In 2017, the upper bound was improved to $O(DG\sqrt{nT/k})$ by Shamir \cite{Shamir2017}, where the player queries $k$ points in each round. It is worth mentioning that the regret lower bound of zeroth-order online convex optimization with $O(1)$ queries each round is still not known very well, and the best upper bound still has additional dependence of dimension.

\begin{table}[]
    \centering
    \begin{tabular}{c|c|c|c}
        Paper & Feedback & Adversary & Regret \\
        \hline
        \cite{Zinkevich03,Hazan16} & Full information (first-order oracles) & Completely adaptive & $O(DG\sqrt{T})$ \\
        \cite{Hazan16} & Full information (first-order oracles) & Completely adaptive & $\Omega(DG\sqrt{T})$ \\
        \hline
        \cite{Flaxman2005} & Single query (zeroth-order oracles) & Oblivious & $O(\sqrt{DGCn}T^{3/4})$ \\
        \cite{Bubeck2016} & Single query (zeroth-order oracles) & Oblivious & $O(DGn^{11} \sqrt{T}\log T )$ \\
        \cite{bubeck2017} & Single query (zeroth-order oracles) & Oblivious & $O(DGn^{9.5} \sqrt{T}\log T )$ \\
        \cite{lattimore2020} & Single query (zeroth-order oracles) & Oblivious & $O(DGn^{2.5} \sqrt{T}\log T )$ \\
        \hline
        \cite{Agarwal2010} & $O(1)$ queries (zeroth-order oracles) & Adaptive & $O((D^2+n^2G^2)\sqrt{T})$ \\
        \cite{Shamir2017} & k-queries (zeroth-order oracles) & Adaptive  & $O(DG\sqrt{nT/k})$ \\
        This work & $O(n)$ queries (zeroth-order oracles) & Completely adaptive & $O(DG\sqrt{T})$ \\
        \hline
        This work & $O(1)$ queries (quantum zeroth-order oracles) & Completely adaptive & $O(DG\sqrt{T})$ \\
        \hline
        \multicolumn{4}{c}{$\uparrow$ General convex loss functions / $\alpha$-strongly convex loss functions $\downarrow$} \\
        \hline
        \cite{Hazan2007} & Full information (first-order oracles) & Completely adaptive & $O(G^2 \log T)$ \\
        \hline
        This work & $O(1)$ queries (quantum zeroth-order oracles) & Completely adaptive & $O(G^2 \log T)$
    \end{tabular}
    \caption{Regret bound for online convex optimization with different settings. For a strategy, the less feedback information that the player uses, the better; the stronger adversary the player faces, the better. The full information feedback model reveals the most information about the function, while the single-query feedback model reveals the least. The completely adaptive adversary is strongest, thus the corresponding models are the least restrictive ones in using, while the oblivious adversary is weakest, thus the corresponding models are the most stringent ones in using. Compared with the first block, the improvement of Algorithm \ref{QOCO} is with respect to the feedback. Compared with the second block, the improvement is with respect to the adversary and the dependence of $T$ and $n$, with the price of slightly stronger feedback. Compared with the third block, the improvement is with respect to the adversary and the dependence of $n$. The part above the row containing arrows is about general convex loss functions, while the part below is about strongly convex loss functions.}
    \label{tab:RB}
\end{table}

\subsection{Problem setting and our contributions}
\label{secCon}
In this paper, we present quantum algorithms for the online convex optimization problem (Subsection \ref{secOCO}) in the multi-query bandit setting, exploring  whether quantum advantages can be found. Here, an algorithm is allowed to query the zeroth-order oracle  multiple times after committing the prediction for getting feedback in each round.
A classical zeroth-order oracle $O_f$ to the loss function $f$, queried with a vector $x\in\mathcal{K}$,  outputs $O_f(x)=f(x)$.
A quantum zeroth-order oracle $Q_f$ is a unitary transformation that maps a quantum state $\ket{x}\ket{q}$ to the state $\ket{x}\ket{q+ f(x)}$, where $\ket{x}$, $\ket{q}$ and $\ket{q+ f(x)}$ are basis states corresponding to the floating-point representations of $x$, $q$ and $q+f(x)$.
Moreover, given the superposition input $\sum_{x,q}\alpha_{x,q}\ket{x}\ket{q}$,  by linearity the quantum oracle will output the state  $\sum_{x,q}\alpha_{x,q}\ket{x}\ket{q+ f(x)}$.

Unlike the previous work of the bandit setting, we do not need to limit the power of the adversary, namely, the adversary in our setting is \textit{completely adaptive}, who can choose $f_t$ after observing the player's choice $x_t$.
We assume that both the player and the adversary are quantum, which means that the adversary returns a quantum oracle as feedback and the player can use a quantum computer and query the oracle with a superposition input.
In addition, as usually in online convex optimization, we also make the following assumptions:
The loss functions are G-Lipschitz continuous, that is, $|f_t(x)-f_t(y)| \leqslant G \|y-x\|,\quad \forall x, y \in \mathcal{K}$; the feasible set $\mathcal{K}$ is bounded and its diameter has an upper bound $D$, that is, $\forall{x,y \in \mathcal{K}}, \|x-y\|_2 \leq D$. $\mathcal{K},D,G$ are known to the player.

Our main results are as follows.

\begin{itemize}
    \label{RQ}
    \item[i] In online convex optimization problems with quantum zeroth-order oracles, there exists a quantum randomized algorithm that can achieve the regret bound $O(DG\sqrt{T})$, by querying the oracle  $O(1)$ times in each round. (Theorem \ref{TQ})

    \label{RAQ}
    \item[ii] In online convex optimization problems with quantum zeroth-order oracles and $\alpha$-strongly convex loss functions, there exists a quantum randomized algorithm that can achieve the regret bound $O(G^2\log{T})$,
    by querying the oracle  $O(1)$ times in each round. (Theorem \ref{TAQ})
\end{itemize}

For completeness, we  also give a simple classical algorithm which guarantees $O(DG\sqrt{T})$ regret by consuming $O(n)$ queries in each round (see Theorem \ref{TC}).

From Table \ref{tab:RB} and the above results, one can see the following points:
\begin{itemize}
      \item Quantum algorithms outperform classical ones in the zeroth-order OCO model, since to our best knowledge the   optimal classical algorithm  with $O(1)$ queries in each round can only achieve $O(DG\sqrt{nT})$ regret \cite{Shamir2017}, where $n$ is the dimension of the feasible set, whereas our quantum algorithm has a better regret $O(DG\sqrt{T})$.
      \item  The quantum zeroth-order  oracle  is as powerful as the classical first-order oracle, since our quantum algorithm with only $O(1)$ queries to the zeroth-order quantum oracle in each round has achieved the  regret  lower bound $\Omega(DG\sqrt{T})$  of the  classical algorithms  with first-order oracles \cite{Hazan16}.

    \item Theorem \ref{TAQ} shows that for $\alpha$-strongly convex loss functions, the quantum algorithm with only $O(1)$ queries to the zeroth-order oracle in each round   can achieve the same regret $O(G^2\log{T})$ as the classical algorithms in the   first-order setting \cite{Hazan2007}.
\end{itemize}

The dependency relationship between lemmas and theorems are depicted in Figure \ref{Relation}.
The main idea is that  we first give a quantum algorithm  in Algorithm \ref{QOCO}, and then we show that the algorithm can guarantee the results mentioned above by choosing appropriate parameters.
The overall idea of choosing parameters is as follows: since we can't get the best fixed strategy $x^*$ directly when we analyze the regret bound, we prove a stronger property, namely the subgradient bound, to get the difference between the loss suffered by the player and the function value of the loss function at any point in the feasible set (the second column in Figure \ref{Relation}). Then we let the `any point' be $x^*$ and choose parameters to bound every term and make the regret as small as it can (the third column in Figure \ref{Relation}).

In technical aspect, we use Jordan's quantum gradient estimation method \cite{Jordan05} as the gradient estimator of each round. However, in the original 1-query version, the analysis was given by omitting the high-order terms of Taylor expansion of the function directly, which did not give any error bound we need.
Later, a version contained error analysis was given in \cite{Gilyen2019}, and was applied to the general convex optimization problem\cite{DeWolf18,Childs18}.
In those case, however, $O(\log{n})$ repetitions were needed to estimate the gradient/subgradient within a acceptable error.
It's obvious that it doesn't meet our requirement, namely $O(1)$-query.
For solving the OCO problem, firstly, we improve the analysis of the quantum gradient estimation method in Lemma \ref{QGB}, and show that $O(1)$ queries is enough in our problem, instead of $O(\log{n})$ repetitions.
This comes from the observation that, for each coordinate, at the expense of a weaker quality of approximation, the failure probability of a single repetition can be made small (To get the $\ell_1$ bound with high probability).
The worse approximation guarantee can then be fixed by choosing a finer grid.
Secondly, we introduce the uncompute part which recovers the ancillary registers to the initial states.
Since the quantum subroutine needs to be called many times in an online setting, it is necessary to recycle the quantum resource otherwise it will waste a substantial number of qubits.
Furthermore, in $\alpha$-strong case, we give a better subgradient bound by using strong convexity so that logarithmic regret can be guaranteed by the same algorithm with different parameters.

The rest of this paper is organized as follows. Section \ref{Q} is for the online convex optimization with quantum zeroth-order oracles; Section \ref{C} is for the online convex optimization with classical zeroth-order oracles. Notations and some extra definitions are listed in \ref{appDef} for the reader's benefit. Proofs, except those of our main theorems, are placed in \ref{app_proofs}.

\begin{figure}
    \centering
    \begin{tikzpicture}[node distance=20pt]
        \node[draw](L3){Lemma 3};
        \node[draw, below=of L3](L1){Lemma 1};
        \node[draw, below=of L1](L2){Lemma 2};
        \node[draw, right=of L3](L4){Lemma 4};
        \node[draw, right=of L4](T1){Theorem 1};
        \node[draw, right=of L1](L5){Lemma 5};
        \node[draw, right=of L5](T2){Theorem 2};
        \node[draw, below=of L2](L6){Lemma 6};
        \node[draw, right=of L6](L7){Lemma 7};
        \node[draw, right=of L7](T3){Theorem 3};

        \draw[->] (L3)--(L4);
        \draw[->] (L3)--(L5);
        \draw[->] (L4)--(T1);
        \draw[->] (L1)--(L4);
        \draw[->] (L1)--(L5);
        \draw[->] (L5)--(T2);
        \draw[->] (L2)--(L4);
        \draw[->] (L2)--(L7);
        \draw[->] (L6)--(L7);
        \draw[->] (L7)--(T3);

    \end{tikzpicture}
    \caption{The relation among the lemmas and theorems of this paper. For example, the three arrows before \boxed{\rm Lemma \ 4} indicate that Lemma 4 is derived from Lemma 1, 2 and 3. The lemmas in the first column are technical lemmas which analyze the evaluating error of the basic modules. The lemmas in the second column are middle lemmas which combine those before the arrows correspondingly and show the subgradient bound in each round of the algorithms for different settings. The theorems in the last column give the carefully chosen parameters and prove the regret bound for different settings.}
    \label{Relation}
\end{figure}
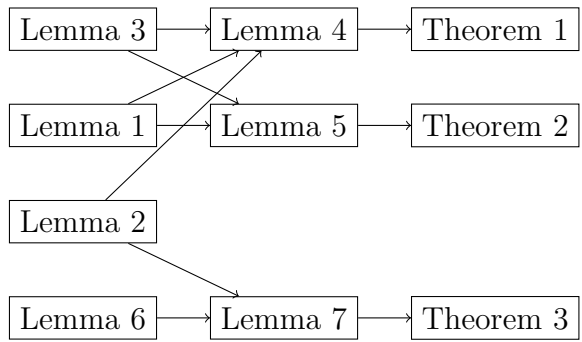

\section{Online convex optimization with quantum zeroth-order oracles}
\label{Q}
This section aims to prove Theorem \ref{TQ} and \ref{TAQ}. We first give a quantum algorithm and state some technical lemmas in Subsection \ref{secA}. Then in Section \ref{Q1}, for general convex loss functions, we show that by choosing appropriate parameters, sublinear regret can be guaranteed. Finally in Subsection \ref{Q3}, we show that for $\alpha$-strongly convex loss functions, $O(\log{T})$ regret can be guaranteed by Algorithm \ref{QOCO} with different parameters, which gives Theorem \ref{TAQ}. See  \ref{appDef} for the definition of $\alpha$-strongly convex functions.

\subsection{Algorithm}
\label{secA}
For the OCO problem stated in Subsection \ref{secOCO} and the setting stated in Subsection \ref{secCon}, given the total horizon $T$ and $\delta$,  we present Algorithm \ref{QOCO} to produce a decision sequence $x_1, x_2, x_3, \dots, x_T$ for the player, such  that it achieves a regret being sublinear of $T$, with probability greater than $1-\delta$.
Specifically, $\delta$ is divided into two parameters $p$ and $\rho$ which are two intermediate parameters used to adjust the success probability of two sub-processes (Lemma \ref{QGB} and \ref{Smooth}).
Initially, the algorithm chooses $x_1$ randomly from $\mathcal{K}$, and then sequentially produces  $x_2, x_3, \dots, x_T$ by   online gradient descent.  Steps 5-12 are the process of quantum gradient estimation.
The quantum circuit of $Q_{F_t}$ in Step 7 is constructed after the sampling of $z$ by using $Q_{f_t}$ twice;
$\mathbbm{1}$ in Step 7 is the $n$-dimensional all 1's vector;
the last register and the operation of addition modulo $2^c$ in Step 8 are used for implementing the common technique in quantum algorithm known as phase kickback which adds a phase shift related to the oracle;
Step 9 is known as uncompute trick which recovers the ancillary registers to the initial states so that they can be used directly in the next iterative;
the diagrammatic representation is depicted in Figure \ref{QGEC};
the projection operation is defined as $\hat{P}_{\mathcal{K}}(y) \triangleq  \mathop{\arg\min_{x \in \mathcal{K}}} \|x - y\|$; $B_{\infty}(x,r)$ is the ball in $L_{\infty}$ norm with radius $r$ and center $x$.

\begin{algorithm}[H]
    \caption{Quantum online subgradient descent (QOSGD)}
    \label{QOCO}
    \begin{algorithmic}[1]
        \REQUIRE Step sizes $\{\eta_t\}$, parameters $\{r_t\}, \{r^{\prime}_t\}$
        \ENSURE $x_1, x_2, x_3, \dots x_T$
      \STATE  Choose the initial point $x_1\in\mathcal{K}$ randomly.
        \FOR{$t=1$ to $T$}
            \STATE Play $x_t$, get the oracle of loss function $Q_{f_t}$ from the adversary.
            \STATE Sample $z\in B_{\infty}(x_t,r_t)$.
            \STATE Let $\beta=\cfrac{nG}{pr_t}$.  Prepare the initial state: $n$ $b$-qubit registers $\ket{0^{\otimes b},0^{\otimes b},\dots,0^{\otimes b}}$ where $b=\log_2 \cfrac{G\rho}{4\pi n^2 \beta r^{\prime}_t}$.
            Prepare $1$ $c$-qubit register $\ket{0^{\otimes c}}$ where $c=\log_2{\cfrac{4G}{2^b n \beta r^{\prime}_t }}-1$.
            And prepare $\ket{y_0}=\cfrac{1}{\sqrt{2^n}}\sum_{a\in\{0,1,\dots,2^n-1\}}e^{\cfrac{2\pi i a}{2^n}}\ket{a}$.
            \STATE Apply Hadamard transform to the first $n$ registers.
            \STATE Perform the quantum query oracle $Q_{F_t}$ to the first $n+1$ registers, where  $F_t(u)=\cfrac{2^b}{2Gr^{\prime}_t} \left [f_t \left (z+\cfrac{r^{\prime}_t}{2^b} \left (u-\cfrac{2^b}{2}\mathbbm{1} \right ) \right )-f_t(z) \right ]$, and the result is stored in the $(n+1)$th register.
            \STATE Perform the addition modulo $2^c$ operation to the last two registers.
            \STATE Apply the inverse evaluating oracle $Q_{F_t}^{-1}$ to the first $n+1$ registers.
            \STATE Perform quantum inverse Fourier transformations to the first $n$ registers separately.
            \STATE Measure the first $n$ registers in computation bases respectively to get $m_1,m_2,\dots,m_n$.
            \STATE Let $\widetilde{\nabla} f_t(x_t)=\cfrac{2G}{2^b} \left (m_1-\cfrac{2^b}{2},m_2-\cfrac{2^b}{2},\dots, m_n-\cfrac{2^b}{2} \right )^T$.
            \STATE Update $x_{t+1}=\hat{P}_{\mathcal{K}}(x_t-\eta_t \widetilde{\nabla} f_t(x_t))$.
            \STATE Bitwise erase the first $n$ registers with control-not gates controlled by the corresponding classical information of the measurement results $m_1,m_2,\dots,m_n$.
        \ENDFOR
    \end{algorithmic}
\end{algorithm}

First we analyze the query complexity of Algorithm \ref{QOCO}. In each round, it needs to call the oracle twice to construct $Q_F$, and twice to perform the uncompute step $Q_F^{-1}$, so totally $4$ times for computing the gradient. Thus, $O(1)$ times for each round.

In order to prove the main results of this paper (i.e., Theorems \ref{TQ} and \ref{TAQ}), some technical lemmas (i.e., Lemmas \ref{QGB}, \ref{SGB}, and \ref{Smooth} ) are required, of which the proofs  are presented in  \ref{app_proofs}.  First,  Lemma \ref{QGB} shows that the evaluating error of the gradient can be bounded. The proof sketch is: using the analysis framework of the phase estimation to get the error bound of each dimension of the ideal state; then using the trace distance to bound the difference in the probabilities between the ideal state and the current state of the algorithm with the help of smoothness; at last, using the union bound to get the one norm bound of the evaluating error.

\begin{figure*}[tbp]
    \includegraphics[width=16cm]{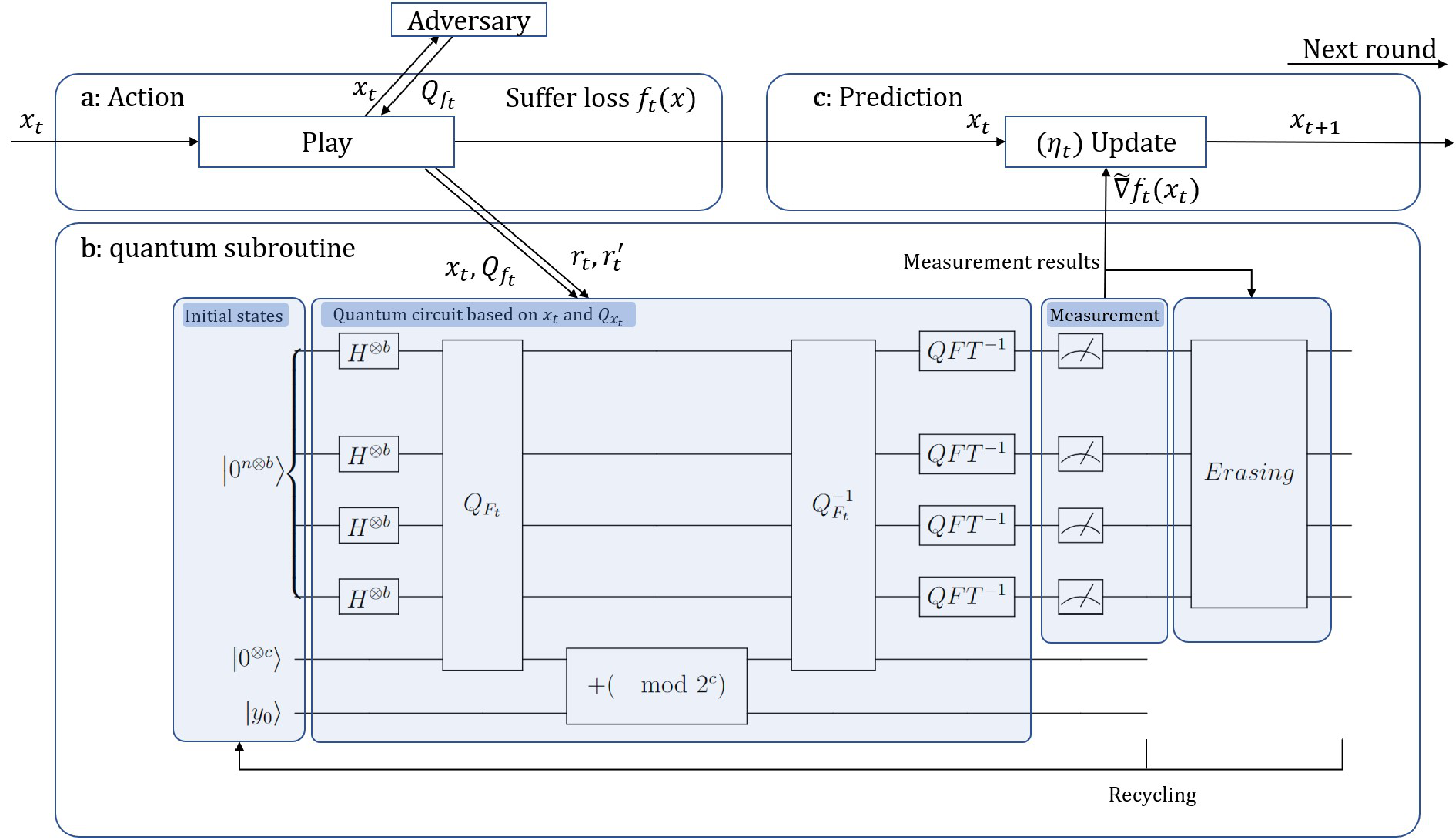}
\caption{Diagrammatic representation of Algorithm \ref{QOCO} in timestep $t$. The workflow can be divided into three parts: (a) Playing, feedback getting and loss suffering. This part is about the interaction between the player and the adversary. The decision and the corresponding feedback, with the parameters in this timestep, are then passed to the quantum subroutine. (b) The quantum subroutine for the gradient estimation. The initial quantum data structure is registers with all $\ket{0}$ in it, and the ancillary register with $\ket{y_0}$ in it is prepared by applying QFT to $\ket{1}$. The inputs (decision and feedback information) are encoded into the unitary transformation $Q_{F_t}$. Here the outputs on the right-hand side of the measurement operators stand for the quantum states after measurement, while the classical information of the measurement result is expressed by the outputs above the measurement part. The quantum register resources are recyclable through the uncompute step and the bitwise erasing, where the bitwise erasing is implemented with control-not gates controlled by the corresponding classical information of the measurement results. (c) Generating the decision for the next round. The prediction is generated by gradient descent with an appropriate learning rate. \label{QGEC}}
\end{figure*}

\begin{restatable}{mylem}{QGB}
\label{QGB}
    In Algorithm \ref{QOCO}, for all timestep $t$, if $f_t$ is $\beta$-smooth in the domain of $B_\infty(x_t,r_t+r^{\prime}_t)$,  then for any $r_t,r^{\prime}_t>0$ and $1\geq \rho_t >0$, the estimated gradient $\widetilde{\nabla} f_t(x_t)$ satisfies
    \begin{align}
        \Pr[\|{\nabla f_t(z)-\widetilde{\nabla} f_t(x_t)}\|_1>8 \pi n^3 (n/\rho_t +1) \beta r^{\prime}_t/\rho_t]<\rho_t.
    \end{align}

\end{restatable}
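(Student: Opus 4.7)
The plan is to follow the standard analysis of Jordan's quantum gradient estimation algorithm, as refined by Gilyén et al.\ \cite{Gilyen2019} and Childs et al.\ \cite{Childs18,DeWolf18}, while sharpening the dimension and failure-probability dependence so that $O(1)$ queries suffice. Algorithm \ref{QOCO} prepares a uniform superposition over a $b$-bit grid in each of $n$ registers, applies the phase-kickback circuit of Figure \ref{QGEC} so that the amplitude of a grid point $y$ picks up a phase proportional to $F(z + r' y)$, and then performs $n$ inverse QFTs. The goal is to show that each of the $n$ register measurements produces an estimate of the corresponding component of $\nabla f(z)$, and then to combine the per-coordinate errors by a union bound and norm conversion.

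First I would Taylor-expand $F$ around $z$: by $\beta$-smoothness on $B_\infty(z,r')\subseteq B_\infty(x,r+r')$ one has $F(z+r'y)=F(z)+r'\langle \nabla f(z),y\rangle+R(y)$ with remainder $|R(y)|\leq \tfrac{1}{2}\beta(r')^2\|y\|^2$. Plugging this into the phase factors accumulated by the oracle calls shows that the state just before the inverse QFTs is close (in total variation) to the ideal linear-phase state $\sum_y e^{2\pi i\langle \nabla f(z),y\rangle}\ket{y}$, with closeness controlled by the maximum of $|R(y)|/r'$ over the grid. This is the only place where smoothness enters, and it is precisely the reason we need $f$ to be $\beta$-smooth over the enlarged ball $B_\infty(x,r+r')$.

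Next I would invoke the standard phase-estimation error bound on the ideal linear-phase state: each coordinate register yields, with probability at least $8/\pi^2$, an approximation to the corresponding component of $\nabla f(z)$ of precision $O(1/(r'\,2^b))$. To drive the per-coordinate failure probability below $\rho/n$, so that a union bound over all $n$ coordinates yields total failure probability at most $\rho$, I would widen the readout grid by a factor $O(n/\rho)$; this, together with the Taylor remainder, inflates the per-coordinate error by the stated $(n/\rho+1)$ factor. Converting the resulting per-coordinate ($L_\infty$-type) bound into an $L_1$ bound costs an additional factor of $n$, and collecting the $8\pi$ constants that come from the standard phase-estimation analysis yields exactly the claimed $8\pi n^3(n/\rho+1)\beta r'/\rho$.

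The main obstacle is keeping the query complexity at $O(1)$ rather than the $O(\log n)$ repetitions used in the original Childs--de Wolf analysis while still hitting the desired failure probability. The key is to push the extra precision required by the union bound into the grid parameters $b$ and $c$ rather than into the number of oracle calls, which requires careful bookkeeping of how $b$, $c$, $r'$, and $\rho$ interact: enlarging the grid also enlarges the region over which the Taylor remainder must be controlled, so the choice of $r'$ must shrink accordingly for the two error sources to balance at the stated bound.
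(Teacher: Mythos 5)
Your proposal follows essentially the same route as the paper's proof: bound the deviation of the actual phase state from the ideal linear-phase state $\sum_v e^{2\pi i \langle\nabla f(z),v\rangle/2G}\ket{v}$ via the $\beta$-smooth Taylor remainder (plus the $c$-register rounding term) and a trace-distance argument, then apply the phase-estimation tail bound with the error window inflated by the factor $(n/\rho+1)$ so that each coordinate fails with probability below $\rho/n$ without any repetition, and finish with a union bound over the $n$ coordinates and an $L_\infty$-to-$L_1$ conversion costing a factor $n$. Your key observation---that the union-bound precision is absorbed into the grid parameters $b,c$ and a correspondingly smaller $r'$ rather than into extra queries---is exactly the paper's mechanism for achieving $O(1)$ queries, so the proposal is correct and not materially different.
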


The evaluating error of subgradient of $f_t$ at point $x_t$ in each round can also be bounded by convexity and simple equivalent transformation, as follows.

\begin{restatable}{mylem}{SGB}
    \label{SGB}
    {\rm (Lemma 12 of \cite{DeWolf18})}
    In Algorithm \ref{QOCO}, for all timestep $t$, let $z\in B_{\infty}(x_t,r_t)$ and $f_t:\mathcal{K} \to \mathbb{R}$ be a convex function with Lipschitz parameter $G$,  then for any $y\in \mathcal{K}$,  the estimated gradient $\widetilde{\nabla} f_t(x_t)$ satisfies
    \begin{align}
        f_t(y)\geq f_t(x_t)+\widetilde{\nabla} f_t(x_t)^{\mathrm{T}}(y-x_t)-\|\nabla f_t(z)-\widetilde{\nabla} f_t(x_t)\|_1\|y-x_t\|_{\infty}  -2G\sqrt{n}r_t.
    \end{align}
\end{restatable}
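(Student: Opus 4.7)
The plan is to start from the standard convexity inequality applied at the evaluation point $z$ (not $x$), and then transport everything back to $x$ using the Lipschitz property for the function-value terms and Hölder's inequality for the inner-product terms. This is the natural approach because $\widetilde{\nabla} f(x)$ is designed to approximate $\nabla f(z)$ (the true gradient at the sample point from $B_\infty(x,r)$), not $\nabla f(x)$, so all error contributions should ultimately be attributed either to the sampling radius $r$ or to the estimation error $\|\nabla f(z)-\widetilde{\nabla} f(x)\|_1$.

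Concretely, first I would write the convexity inequality $f(y) \geq f(z) + \nabla f(z)^{\mathrm{T}}(y-z)$ and then split $y-z = (y-x)+(x-z)$ to obtain
\begin{align*}
f(y) \geq f(z) + \nabla f(z)^{\mathrm{T}}(y-x) + \nabla f(z)^{\mathrm{T}}(x-z).
\end{align*}
Next I would handle the three ``nuisance'' terms separately. The term $f(z)$ is replaced by $f(x)$ at a cost of at most $G\|z-x\|_2 \leq G\sqrt{n}\,r$ via the $G$-Lipschitz property together with the inclusion $B_\infty(x,r)\subseteq B_2(x,\sqrt{n}r)$. The term $\nabla f(z)^{\mathrm{T}}(x-z)$ is bounded below by $-\|\nabla f(z)\|_2 \|x-z\|_2 \geq -G\sqrt{n}\,r$, again using $\|\nabla f(z)\|_2\leq G$ from Lipschitzness. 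These two contributions combine into the $-2G\sqrt{n}\,r$ term in the target inequality.

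Finally, to replace $\nabla f(z)$ by $\widetilde{\nabla} f(x)$ in the $\nabla f(z)^{\mathrm{T}}(y-x)$ term, I would write
\begin{align*}
\nabla f(z)^{\mathrm{T}}(y-x) = \widetilde{\nabla} f(x)^{\mathrm{T}}(y-x) + \bigl(\nabla f(z)-\widetilde{\nabla} f(x)\bigr)^{\mathrm{T}}(y-x),
\end{align*}
and lower-bound the second piece using Hölder's inequality with the $\ell_1/\ell_\infty$ dual pair, yielding $-\|\nabla f(z)-\widetilde{\nabla} f(x)\|_1 \|y-x\|_\infty$. Assembling all pieces gives precisely the stated inequality.

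There is no real obstacle here: the argument is a short chain of convexity, Lipschitzness, and the dual-norm inequality. The only place that requires a little care is choosing $\ell_1/\ell_\infty$ rather than $\ell_2/\ell_2$ for Hölder; this choice is dictated by the form of the gradient error bound in Lemma \ref{QGB}, which is stated in $\ell_1$ norm, and by the fact that $\mathcal{K}$ has bounded $\ell_\infty$-diameter $D$, so that $\|y-x\|_\infty \leq D$ will plug in cleanly when this lemma is later invoked inside the regret analysis.
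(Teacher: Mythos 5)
Your proposal is correct and follows essentially the same route as the paper's own proof: convexity at the sampled point $z$, the split $y-z=(y-x)+(x-z)$, Lipschitzness to absorb both $f(z)-f(x)$ and the Cauchy--Schwarz bound on $\nabla f(z)^{\mathrm{T}}(x-z)$ into the $-2G\sqrt{n}r$ term (using $\|z-x\|_2\leq\sqrt{n}r$ and $\|\nabla f(z)\|_2\leq G$), and H\"older with the $\ell_1/\ell_\infty$ dual pair for the gradient-error term. Incidentally, the paper's penultimate displayed line reads $-G\|z-x\|_2+\|g\|_2\|x-z\|_2$, which is a sign typo; your version with the lower bound $-\|\nabla f(z)\|_2\|x-z\|_2$ is the intended argument.
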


Then, Lemma \ref{Smooth} shows that non $\beta$-smooth loss functions are still $\beta$-smooth in a small region with high probability by bounding the trace of their Hessian matrices \cite{Childs18,Hormander2015}.

\begin{restatable}{mylem}{Smooth}
\label{Smooth}
    {\rm (Lemma 2.5 and 2.6 of the old version of \cite{Childs18})}
    In Algorithm \ref{QOCO}, for all timestep $t$, let $f_t:\mathcal{K} \to \mathbb{R}$ be a convex function with Lipschitz parameter $G$. Then for any $r_t,r^{\prime}_t>0$ and $1 \geq p_t>0$, we have
    \begin{align}
        \Pr_{z\in B_{\infty}(x_t,r_t)} \left[\exists y\in B_{\infty}(z,r^{\prime}_t), \Tr{\nabla^2 f_t(y)} \geq \frac{nG}{p_t r_t}\right]
        \leq  p_t.
    \end{align}
\end{restatable}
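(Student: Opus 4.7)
The plan is to combine the divergence theorem with Markov's inequality applied to the Laplacian of $f$. The crucial starting observation is that since $f:\mathcal{K}\to\mathbb{R}$ is convex, its Hessian $\nabla^2 f(y)$ is positive semi-definite wherever it is defined (Alexandrov's theorem guarantees this holds almost everywhere), so $\Tr \nabla^2 f(y)=\Delta f(y)\ge 0$ pointwise a.e.  This non-negativity is exactly what makes Markov's inequality directly applicable to $\Delta f$.

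Next I would invoke the divergence theorem on the slightly enlarged $L_\infty$-cube $B_\infty(x,r+r')$, writing
\[
\int_{B_\infty(x,r+r')} \Delta f(y)\, dy \;=\; \int_{\partial B_\infty(x,r+r')} \nabla f(y)\cdot \hat{n}(y)\, dS(y),
\]
and then use the $G$-Lipschitz property of $f$ (so that each partial $|\partial_i f(y)|\le G$, and hence $|\nabla f\cdot \hat{n}|\le G$ on every face) together with the surface area $2n(2(r+r'))^{n-1}$ of the $L_\infty$-cube to upper bound the right-hand side by $G\cdot 2n(2(r+r'))^{n-1}$.

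Markov's inequality applied to the non-negative $\Delta f$ on $B_\infty(x,r+r')$ with threshold $nG/(pr)$ then yields a Lebesgue-measure bound on the level set $S=\{y\in B_\infty(x,r+r'):\Delta f(y)\ge nG/(pr)\}$.  To finish, observe that the event in the statement — existence of some $y\in B_\infty(z,r')$ with $\Tr \nabla^2 f(y)\ge nG/(pr)$ — is the same as saying that $z$ lies within $L_\infty$-distance $r'$ of $S$.  A Fubini-style integration argument (equivalently, controlling the Minkowski expansion $S\oplus B_\infty(0,r')$ intersected with $B_\infty(x,r)$) converts the volume estimate on $S$ into the target probability bound of $p$.

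The main obstacle will be this last step: in general a Minkowski thickening $S\oplus B_\infty(0,r')$ can inflate the measure relative to $\mathrm{vol}(S)$ and introduce unwanted dimension-dependent factors.  Handling it cleanly requires exploiting the geometry of $L_\infty$-balls, the non-negativity of $\Delta f$, and the slack already built into the threshold $nG/(pr)$ — e.g.\ by integrating $z$ against an indicator kernel that captures whether $B_\infty(z,r')$ meets $S$, rather than naively expanding the level set.  Once that accounting is set up, the divergence theorem, the Lipschitz surface bound, and Markov are essentially textbook, and the stated $p$-probability bound falls out with the explicit constant $nG/(pr)$.
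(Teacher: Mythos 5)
Your plan retraces the paper's own route almost step for step: the divergence-theorem flux bound over the $L_\infty$-cube is precisely how the cited Inequality (2.21) of \cite{Childs18} is derived (it gives $\mathbb{E}_{z\in B_{\infty}(x,r)}\Tr(\nabla^2 f(z))\le nG/r$), the paper then applies Markov exactly as you do, and it too handles the existential quantifier by thickening the bad level set (its set $B_{\infty}(Y,r')$). Your use of Alexandrov's theorem in place of the paper's mollification is a legitimate variant: the a.e.\ pointwise Laplacian of a convex function is the density of the absolutely continuous part of the distributional Laplacian, so the flux bound survives; the paper instead smooths $f$ first, with the same effect.

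The genuine gap is the step you yourself flag as the ``main obstacle'' and then do not close. Markov gives only a volume bound on $S=\{y:\Tr\nabla^2 f(y)\ge nG/(pr)\}$, while the event in the lemma is $z\in S\oplus B_{\infty}(0,r')$, and no bound on $\mathrm{vol}(S)$ controls the volume of its Minkowski thickening: $S$ may be a null set (e.g., scattered points where the pointwise Hessian exists and is huge) whose thickening has measure at least $(2r')^n$ per component and can even cover $B_{\infty}(x,r)$. Your proposed rescue --- a ``Fubini-style'' argument integrating $z$ against an indicator kernel for whether $B_{\infty}(z,r')$ meets $S$ --- cannot work in principle: Fubini converts $\mathrm{vol}(S)$ into a bound on the expected \emph{fraction} of $B_{\infty}(z,r')$ lying in $S$, but the existential event requires only that the ball \emph{meet} $S$, which can happen while that fraction is zero; no averaging detects a null witness set. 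So the claim that the bound ``falls out'' is unsubstantiated. In fairness, the paper's own proof is loose at exactly this point: it asserts $\mathcal{M}(B_{\infty}(Y,r'))\le(1+(2r')^n)\mathcal{M}(Y)$ via a ``union bound,'' which is false in general (take $\mathcal{M}(Y)=0$), and then drops the $(1+(2r')^n)$ factor ``approximately'' since $r'\ll 1$ --- presumably why the authors of \cite{Childs18} reworked this lemma in the published version. You have correctly located the crux, but closing it needs more than your sketch offers: either structural control of the level set (e.g., working throughout with the mollified, hence continuous, Laplacian) or a reformulation of the lemma in measure-theoretic rather than existential terms.
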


\subsection{Analysis for general convex loss functions}
\label{Q1}

In this subsection we show how to choose  appropriate  parameters such that  Algorithm \ref{QOCO} guarantees $O(DG\sqrt{T})$ regret for all $T\geq 1$, which gives Theorem \ref{TQ}.  Before that, the following lemma is required, with proof given in  \ref{app_proofs}. This follows from combining Lemma \ref{QGB}, \ref{SGB}, and \ref{Smooth}.

\begin{restatable}{mylem}{QSGB}
    \label{QSGB}
    In Algorithm \ref{QOCO}, for all timestep $t$, let $f:\mathcal{K} \to \mathbb{R}$ is convex with Lipschitz parameter $G$, where $\mathcal{K}$ is a convex set with diameter $D$,  then for any $y\in \mathcal{K}$, with probability greater than $1-(\rho_t+p_t)$, the estimated gradient $\widetilde{\nabla} f_t(x_t)$ satisfies
    \begin{align}
        \label{QSGB1}
        f_t(y)\geq f_t(x_t)+\widetilde{\nabla} f_t(x_t)^{\mathrm{T}}(y-x_t)-\frac{8 \pi n^4 (n +\rho) DG r^{\prime}_t}{\rho_t^2 p_t r_t}-2G\sqrt{n}r_t.
    \end{align}
\end{restatable}

In the following, we prove the regret bound of Algorithm \ref{QOCO} for general convex loss functions.

\begin{mythe}
    \label{TQ}
    Algorithm \ref{QOCO} with parameters $\eta_t=\frac{D}{G\sqrt{t}}, r_t=\frac{1}{\sqrt{tn}}, r^{\prime}_t=\frac{\rho^2 p}{8 \pi \sqrt{tn^9} (n +\rho)}$, can achieve the regret bound $O(DG\sqrt{T})$,
    with probability greater than $1-T(\rho+p)$, and its query complexity is $O(1)$ in each round.
\end{mythe}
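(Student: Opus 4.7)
The plan is to reduce the regret analysis to the textbook online gradient descent (OGD) bound, with the true gradients replaced by the quantum-estimated gradients $\widetilde{\nabla} f_t(x_t)$ produced by Algorithm \ref{QOCO}, then control the estimation error through Lemma \ref{QSGB}. First, I would apply Lemma \ref{QSGB} in every round with $y$ chosen as a minimizer $x^{*}\in\arg\min_{x\in\mathcal{K}}\sum_{t=1}^{T} f_t(x)$, yielding
\[
f_t(x_t)-f_t(x^{*}) \;\leq\; \widetilde{\nabla} f_t(x_t)^{T}(x_t-x^{*}) \;+\; \epsilon_t, \qquad \epsilon_t \;=\; \frac{8\pi n^4(n+\rho)DG\, r'_t}{\rho^2 p\, r_t} + 2G\sqrt{n}\,r_t,
\]
each instance holding with probability at least $1-(p+\rho)$. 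A union bound over the $T$ rounds then gives failure probability at most $T(p+\rho)$, matching the statement of the theorem. This reduces the task to separately bounding $\sum_{t}\widetilde{\nabla} f_t(x_t)^{T}(x_t-x^{*})$ and $\sum_{t}\epsilon_t$.

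Second, I would run the standard OGD analysis for the update $x_{t+1}=\prod_{\mathcal{K}}(x_t-\eta_t\widetilde{\nabla} f_t(x_t))$. Using non-expansivity of the projection and telescoping $\|x_t-x^{*}\|^{2}\leq D^{2}$, one obtains
\[
\sum_{t=1}^{T}\widetilde{\nabla} f_t(x_t)^{T}(x_t-x^{*}) \;\leq\; \frac{D^{2}}{2\eta_T} + \frac{1}{2}\sum_{t=1}^{T}\eta_t\bigl\|\widetilde{\nabla} f_t(x_t)\bigr\|^{2}.
\]
To bound $\|\widetilde{\nabla} f_t(x_t)\|_{2}$, I would invoke Lemma \ref{QGB} on the event (guaranteed with probability $\geq 1-p$ by Lemma \ref{Smooth}) that $f_t$ is $\beta$-smooth on $B_{\infty}(x_t,r_t+r'_t)$ with $\beta=nG/(pr_t)$; since $\|\cdot\|_{2}\leq\|\cdot\|_{1}$, this gives $\|\widetilde{\nabla} f_t(x_t)\|_{2}\leq\|\nabla f_t(z)\|_{2}+\|\widetilde{\nabla} f_t(x_t)-\nabla f_t(z)\|_{1}\leq G+o(1)$ under the prescribed $r'_t$. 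With $\eta_t=D/(G\sqrt{t})$ and $\sum_{t=1}^{T}1/\sqrt{t}=O(\sqrt{T})$, both the $D^{2}/(2\eta_T)$ term and the squared-gradient sum are $O(DG\sqrt{T})$.

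Finally, I would substitute the stated parameters $r_t=1/\sqrt{tn}$ and $r'_t=\rho^{2}p/(8\pi\sqrt{tn^{9}}(n+\rho))$ into $\epsilon_t$ and show that $\sum_{t=1}^{T}\epsilon_t=O(DG\sqrt{T})$: the $2G\sqrt{n}\,r_t=2G/\sqrt{t}$ piece is already $O(G\sqrt{T})$ after summation, while the first piece has been engineered so that the $n^{4}(n+\rho)$ factor in the Lemma \ref{QSGB} numerator is cancelled by the $\sqrt{n^{9}}(n+\rho)$ in $1/r'_t$, leaving only the $1/\sqrt{t}$ decay that produces $O(DG\sqrt{T})$ upon summation. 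The main obstacle I anticipate is precisely this parameter bookkeeping — ensuring that $n$, $\rho$, and $p$ factors cancel cleanly across Lemmas \ref{Smooth}, \ref{QGB}, and \ref{QSGB}, that the smoothness radius $r_t+r'_t$ in Lemma \ref{QGB} is consistent with the bound $\beta=nG/(pr_t)$ supplied by Lemma \ref{Smooth}, and that the random choice of $z\in B_{\infty}(x_t,r_t)$ is treated with the correct conditioning when chaining the two failure events in the union bound.
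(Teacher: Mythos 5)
Your proposal follows the paper's own proof essentially step for step: Lemma \ref{QSGB} applied with $y=x^*$ plus a union bound over the $T$ rounds, the standard projected-OGD telescoping via the Pythagorean theorem with $\|x_t-x^*\|\le D$, the gradient-norm bound obtained from Lemma \ref{QGB} (with $\beta=nG/(pr_t)$ from Lemma \ref{Smooth}) together with the triangle inequality and $\|\cdot\|_2\le\|\cdot\|_1$, and the same final parameter substitution. The only quibble is your claim $\|\widetilde{\nabla} f_t(x_t)\|\le G+o(1)$: with the stated parameters the ratio $r'_t/r_t$ is independent of $t$, so the $\ell_1$ gradient-error bound equals $G$ exactly rather than $o(1)$ (the paper correspondingly uses $(G+G)^2$ in Equation (\ref{T2E6})), and your argument goes through unchanged with the constant $2G$.
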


\begin{proof}
    Inequality (\ref{QSGB1}) is required to hold for all T rounds simultaneously. Let $B_t$ be the event that Algorithm \ref{QOCO} fails to satisfy Inequality (\ref{QSGB1}) in the $t$-th round. First, set the failure rate of each round to be the same, specifically, equal to $p+\rho$.
    Then by Lemma \ref{QSGB}, we have $\Pr(B_1)=\Pr(B_2)=\dots=\Pr(B_T)\leq p+\rho$.
    By the union bound (that is, for any finite or countable event set, the probability that at least one of the events happens is no greater than the sum of the probabilities of the events in the set), we have $\Pr(\cup_{t=1}^T B_t) \leq \sum_{t=1}^{T} \Pr(B_t) \leq T(p+\rho)$.
    Namely, the probability that Algorithm \ref{QOCO} fails to satisfy Inequality (\ref{QSGB1}) at least one round is  less than $T(\rho+p)$, which means the probability that Algorithm \ref{QOCO} succeeds for all $T$ round is greater than $1-T(\rho+p)$.
    Let $x^* \in \arg\min_{x\in \mathcal{K}} \sum_{t=1}^{T}f_t(x)$. Then by Lemma \ref{QSGB}, for the fixed $y=x^*$, with probability $1-T(\rho+p)$ we have
    \begin{align}
        \label{T4P1}
        f_t(x_t) - f_t(x^*)  \leq  \widetilde{\nabla} f_t(x_t)^{\mathrm{T}}(x_t-x^*)  + \frac{8 \pi n^4 (n +\rho) DG r^{\prime}_t}{\rho^2 p r_t}  + 2G\sqrt{n}r_t, \text{for all } t\in [T].
    \end{align}
    By the update rule for $x_{t+1}$ and the Pythagorean theorem, we get
    \begin{align}
        \label{T4P2}
        \|x_{t+1}-x^*\|^2 & = \|\hat{P}_{\mathcal{K}}(x_t-\eta_t\widetilde{\nabla} f_t(x_t))-x^*\|^2  \nonumber\\
        & \leq \|x_t-\eta_t \widetilde{\nabla} f_t(x_t) - x^*\|^2 \nonumber\\
        & = \|x_t-x^* \|^2 + \eta_t^2 \|\widetilde{\nabla} f_t(x_t)\|^2  -2\eta_t \widetilde{\nabla} f_t(x_t)^{\mathrm{T}}(x_t-x^*).
    \end{align}
    Hence
    \begin{align}
        \label{T4P3}
        \widetilde{\nabla} f_t(x_t)^{\mathrm{T}}(x_t-x^*)  \leq  \frac{\|x_t-x^* \|^2 - \|x_{t+1}-x^*\|^2}{2\eta_t} + \frac{\eta_t \|\widetilde{\nabla} f_t(x_t)\|^2}{2}.
    \end{align}
    Substituting Inequality (\ref{T4P3}) into Inequality (\ref{T4P1}) and summing Inequality (\ref{T4P1}) from $t=1$ to $T$, we have
    \begin{align}
        \label{T4P4}
        \sum_{t=1}^{T}(f_t(x_t)-f_t(x^*))
        & \leq \sum_{t=1}^{T}(\widetilde{\nabla} f_t(x_t)^{\mathrm{T}}(x_t-x^*) + \frac{8 \pi n^4 (n +\rho) DG r^{\prime}_t}{\rho^2 p r_t} + 2G\sqrt{n}r_t)  \nonumber \\
        & \leq \sum_{t=1}^{T}(\frac{\|x_t-x^* \|^2 - \|x_{t+1}-x^*\|^2}{2\eta_t} + \frac{\eta_t \|\widetilde{\nabla} f_t(x_t)\|^2}{2} \nonumber \\
        & \quad + \frac{8 \pi n^4 (n +\rho) DG r^{\prime}_t}{\rho^2 p r_t} + 2G\sqrt{n}r_t ).
    \end{align}
   Upper bounds can be obtained for the right side of the above inequality. First:
    \begin{align}
        \label{T2E5}
          \frac{1}{2} \sum_{t=1}^{T}\frac{\|x_t-x^* \|^2 - \|x_{t+1}-x^*\|^2}{\eta_t}
         & \leq  \frac{1}{2} \sum_{t=2}^{T} (\|x_t-x^* \|^2(\frac{1}{\eta_t}-\frac{1}{\eta_{t-1}})) +  \frac{\|x_t-x^* \|^2}{2\eta_1} \nonumber \\
         & \leq \frac{D^2}{2} \sum_{t=2}^{T}(\frac{1}{\eta_t}-\frac{1}{\eta_{t-1}}) +\frac{D^2}{2\eta_1} \nonumber \\
         & = \frac{D^2}{2\eta_T}.
    \end{align}
   Second, let $g=\nabla f_t(z)$, there is
    \begin{align}
        \label{T2E6}
          \sum_{t=1}^{T}\frac{\eta_t \|\widetilde{\nabla} f_t(x_t)\|^2}{2}
         & = \sum_{t=1}^{T}\frac{\eta_t \|\widetilde{\nabla} f_t(x_t)+g-g\|^2}{2} \nonumber\\
         & \leq \sum_{t=1}^{T}\frac{\eta_t (\|\widetilde{\nabla} f_t(x_t)-g\|+\|g\|)^2}{2} \nonumber\\
         & \leq \sum_{t=1}^{T}\frac{\eta_t (\|\widetilde{\nabla} f_t(x_t)-g\|_1+G)^2}{2} \nonumber\\
         & \leq \sum_{t=1}^{T}\frac{\eta_t (\frac{8 \pi n^4 (n +\rho) G r^{\prime}_t}{\rho^2 p r_t}+G)^2}{2},
    \end{align}
    where the last inequality holds as long as Lemma \ref{QSGB} holds (which implies that Lemma \ref{QGB} and \ref{Smooth} holds).

    Setting $\eta_t=\frac{D}{G\sqrt{t}}$, $r_t=\frac{1}{\sqrt{tn}}, r^{\prime}_t=\frac{\rho^2 p}{8 \pi \sqrt{tn^9} (n +\rho)}$, we have
    \begin{align}
          \sum_{t=1}^{T}(f_t(x_t)-f_t(x^*))
         & \leq \frac{DG\sqrt{T}}{2} + \sum_{t=1}^T\frac{D(G+G)^2}{2G\sqrt{t}} + \sum_{t=1}^T\frac{DG}{2 \sqrt{t}} + \sum_{t=1}^T \frac{2G}{\sqrt{t}} \nonumber\\
         & \leq \frac{DG\sqrt{T}}{2}+ \sum_{t=1}^T\frac{2DG}{\sqrt{t}}+ \frac{DG\sqrt{T}}{2} + 2G\sqrt{T} \nonumber\\
         & \leq \frac{DG\sqrt{T}}{2}+ 2DG\sqrt{T}+ \frac{DG\sqrt{T}}{2} + 2G\sqrt{T} \nonumber\\
         & = O(DG\sqrt{T}).
    \end{align}
    Thus, the theorem follows.
\end{proof}

 The space complexity can be analyzed as follows. Replacing $\{r_t\},\{r^{\prime}_t\}$ into $b,c$, we have $b=\log_2 \frac{G\rho}{4\pi n^2 \beta r^{\prime}_t}=\log_2 \frac{2n(n+\rho)}{\rho}=O(\log(Tn/\delta))$, $c=\log_2{\frac{4G}{2^b n \beta r^{\prime}_t }}-1=\log_2{\frac{16\pi n}{\rho }}-1=O(\log(Tn/\delta))$, where $\delta=T(\rho+p)$ is the failure probability we set for the algorithm.
 Since the failure probability of each timestep is set to be the same, the number of qubits we need in each timestep are actually equal.
 After the uncompute step and the bitwise erasing, the registers can be used for the next round directly, and no additional qubit is needed.
Thus, $O(n\log(Tn/\delta))$ qubits are needed totally.


\subsection{Analysis for $\alpha$--strongly convex loss functions}
\label{Q3}
In this subsection, we show that for $\alpha$-strongly convex loss functions,  $O(G^2\log{T})$ regret can be guaranteed by choosing different parameters in Algorithm \ref{QOCO}, which gives Theorem \ref{TAQ}.
Before that, the following lemma is required, with proof given in  \ref{app_proofs}. This follows from combining Lemma \ref{QGB}, \ref{Smooth}, and an improving version of Lemma \ref{SGB}.

\begin{restatable}{mylem}{alphaS}
    \label{alphaS}
    In Algorithm \ref{QOCO}, for all timestep $t$, let $f_t:\mathcal{K} \to \mathbb{R}$ be $\alpha$-strongly convex with Lipschitz parameter $G$, where $\mathcal{K}$ is a convex set,  then for any $y\in \mathcal{K}$, with probability greater than $1-(\rho_t+p_t)$, the estimated gradient $\widetilde{\nabla} f_t(x_t)$ satisfies
    \begin{align}
        \label{SGBAL}
        f_t(y)  \geq f_t(x_t)+\widetilde{\nabla} f_t(x_t)^{\mathrm{T}}(y-x_t)-\frac{8 \pi n^4 (n +\rho) DG r^{\prime}_t}{\rho_t^2 p_t r_t}  -(2G\sqrt{n}+\alpha nD)r_t + \frac{\alpha}{2}\|y-x_t\|^2.
    \end{align}
\end{restatable}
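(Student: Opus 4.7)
The statement is the strongly convex analogue of Lemma \ref{QSGB}, so I would mimic that proof line by line and then add the strong convexity correction. The plan is to combine three ingredients in the same order as before: (i) invoke Lemma \ref{Smooth} to argue that, with probability at least $1-p$ over the sampling of $z\in B_\infty(x,r)$, $f$ is $\beta$-smooth on $B_\infty(z,r')$ with $\beta=nG/(pr)$; (ii) conditional on that smoothness event, apply Lemma \ref{QGB} to conclude that, with probability at least $1-\rho$, the quantum estimate satisfies $\|\nabla f(z)-\widetilde\nabla f(x)\|_1 \le 8\pi n^3(n/\rho+1)\beta r'/\rho = 8\pi n^4(n+\rho)Gr'/(\rho^2 pr)$; (iii) union-bound these two failure events to obtain the stated probability $1-(\rho+p)$.

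Given the good event, I would then redo the single-point deterministic inequality that underlies Lemma \ref{SGB}, but starting from the strong convexity inequality at $z$ rather than plain convexity:
\begin{align*}
f(y)\;\ge\; f(z)+\nabla f(z)^{\mathrm T}(y-z)+\tfrac{\alpha}{2}\|y-z\|^2.
\end{align*}
Split $y-z=(y-x)+(x-z)$ in the linear term and estimate its two pieces as in Lemma \ref{SGB}: swap $\nabla f(z)$ for $\widetilde\nabla f(x)$ at the Hölder cost $\|\nabla f(z)-\widetilde\nabla f(x)\|_1\|y-x\|_\infty \le 8\pi n^4(n+\rho)DGr'/(\rho^2 pr)$, and bound $|\nabla f(z)^{\mathrm T}(x-z)|\le G\|x-z\|_2\le G\sqrt n\,r$ using the Lipschitz bound $\|\nabla f(z)\|_2\le G$ together with $\|x-z\|_\infty\le r$. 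Use Lipschitz continuity once more to replace $f(z)$ by $f(x)-G\sqrt n\,r$. This reproduces exactly the non-strongly convex inequality of Lemma \ref{QSGB} plus the leftover term $\tfrac{\alpha}{2}\|y-z\|^2$.

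The only new piece is converting $\tfrac{\alpha}{2}\|y-z\|^2$ into a bound involving $\|y-x\|^2$. Here I would expand
\begin{align*}
\|y-z\|^2=\|y-x\|^2+2\langle y-x,\,x-z\rangle+\|x-z\|^2 \;\ge\; \|y-x\|^2-2\|y-x\|_2\|x-z\|_2,
\end{align*}
and use $\|y-x\|_2\le \sqrt n\,D$ (since $\mathcal{K}$ has $\ell_\infty$-diameter $D$) together with $\|x-z\|_2\le \sqrt n\,r$, producing $\|y-z\|^2\ge \|y-x\|^2-2nDr$. Multiplying by $\alpha/2$ gives $\tfrac{\alpha}{2}\|y-x\|^2-\alpha n D r$, which combines with the $-2G\sqrt n\,r$ term from Lipschitz to yield the composite slack $-(2G\sqrt n+\alpha nD)r$ appearing in \eqref{SGBAL}.

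\textbf{Main obstacle.} Nothing here is deep; the one place where a careless step would break the stated coefficient is the expansion of $\|y-z\|^2$, where mixing $\ell_\infty$ and $\ell_2$ bounds sloppily yields $\alpha\sqrt n\,Dr$ rather than $\alpha nDr$. I will therefore be careful to apply $\ell_2$ Cauchy--Schwarz to the cross term and convert each $\ell_\infty$ diameter/ball bound into its $\ell_2$ counterpart, picking up one factor of $\sqrt n$ each, before multiplying. Everything else is a routine reassembly of the Lemma \ref{QSGB} proof.
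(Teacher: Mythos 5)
Your proposal is correct and matches the paper's proof essentially step for step: the same three-way combination of Lemma \ref{Smooth} (setting $\beta = nG/(pr)$ with probability $1-p$), Lemma \ref{QGB} (gradient error with probability $1-\rho$) and a union bound, followed by the Lemma \ref{SGB}-style handling of the linear terms at cost $2G\sqrt{n}\,r$ plus the lower bound $\|y-z\|^2 \geq \|y-x\|^2 - 2nDr$. The only cosmetic difference is that you bound the cross term by expanding the inner product and applying Cauchy--Schwarz, while the paper expands $(\|y-x\|-\|z-x\|)^2$ via the reverse triangle inequality; both yield the identical $-2\|y-x\|_2\|z-x\|_2 \geq -2nDr$ estimate (and your warning about keeping the two $\sqrt{n}$ factors, from $\|y-x\|_2 \leq \sqrt{n}D$ and $\|z-x\|_2 \leq \sqrt{n}r$, is exactly what the paper does).
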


In the following, we show that with appropriate parameters, a better regret can be guaranteed for $\alpha$-strongly convex loss functions by Algorithm \ref{QOCO}.

\begin{mythe}
    \label{TAQ}
    For $\alpha$-strongly convex loss functions, Algorithm \ref{QOCO} with parameters $\eta_t=\frac{1}{\alpha t}, r_t=\frac{G^2}{t(2G\sqrt{n}+\alpha nD)}, r^{\prime}_t=\frac{G^2\rho^2 p}{8 \pi tn^4 (n +\rho)(2G\sqrt{n}+\alpha nD)}$, can achieve the regret bound $O(G^2\log{T})$,
    with probability greater than $1-T(\rho+p)$, and its query complexity is $O(1)$ in each round.
\end{mythe}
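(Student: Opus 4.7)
The plan is to mirror the proof of Theorem~\ref{TQ}, replacing Lemma~\ref{QSGB} by Lemma~\ref{alphaS} and exploiting the extra $\frac{\alpha}{2}\|y-x\|^2$ term supplied by strong convexity, which is what converts the telescoping sum from $O(\sqrt{T})$ to $O(\log T)$ provided we choose the step size $\eta_t = 1/(\alpha t)$. First, I would apply the union bound over the $T$ rounds to conclude that Lemma~\ref{alphaS} holds simultaneously for all rounds with probability at least $1-T(\rho+p)$. On this good event, setting $x^*\in\arg\min_{x\in\mathcal{K}}\sum_t f_t(x)$ and $y=x^*$ in Lemma~\ref{alphaS} gives, for every $t$,
\begin{equation*}
f_t(x_t)-f_t(x^*)\;\le\;\widetilde{\nabla} f_t(x_t)^{\mathrm T}(x_t-x^*)-\tfrac{\alpha}{2}\|x_t-x^*\|^2+E_t,
\end{equation*}
where $E_t=\frac{8\pi n^4(n+\rho)DG\,r'_t}{\rho^2 p\,r_t}+(2G\sqrt{n}+\alpha nD)r_t$.

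Next I would reuse the Pythagorean/projection calculation from the proof of Theorem~\ref{TQ} verbatim (it does not depend on strong convexity), namely
\begin{equation*}
\widetilde{\nabla} f_t(x_t)^{\mathrm T}(x_t-x^*)\;\le\;\frac{\|x_t-x^*\|^2-\|x_{t+1}-x^*\|^2}{2\eta_t}+\frac{\eta_t\|\widetilde{\nabla} f_t(x_t)\|^2}{2}.
\end{equation*}
Summing over $t=1,\dots,T$ with the convention $1/\eta_0:=0$ and $\eta_t=1/(\alpha t)$, the telescoping piece becomes $\sum_t \|x_t-x^*\|^2\bigl(\tfrac{1}{2\eta_t}-\tfrac{1}{2\eta_{t-1}}\bigr)=\sum_t\tfrac{\alpha}{2}\|x_t-x^*\|^2$, which exactly cancels the $-\tfrac{\alpha}{2}\|x_t-x^*\|^2$ terms from strong convexity. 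This is the standard Hazan--Agarwal--Kale trick and it leaves behind only $\sum_t \eta_t\|\widetilde{\nabla} f_t(x_t)\|^2/2$ and $\sum_t E_t$ to control.

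For the gradient-norm sum, Lemma~\ref{QGB} combined with the triangle inequality gives $\|\widetilde{\nabla} f_t(x_t)\|\le G+\tfrac{8\pi n^4(n+\rho)G\,r'_t}{\rho^2 p\,r_t}$; the chosen ratio $r'_t/r_t=\rho^2 p/(8\pi n^4(n+\rho))$ is exactly designed so that the second term collapses to $G$, giving $\|\widetilde{\nabla} f_t(x_t)\|\le 2G$. Hence $\sum_t \eta_t\|\widetilde{\nabla} f_t(x_t)\|^2/2\le \frac{2G^2}{\alpha}\sum_{t=1}^T \frac{1}{t}=O\!\bigl(\tfrac{G^2}{\alpha}\log T\bigr)$. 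For the error sum, the second piece of $E_t$ is $(2G\sqrt{n}+\alpha nD)r_t=G^2/t$, which telescopes to $O(G^2\log T)$; a similar substitution handles the first piece, again producing an $O(G^2/t)$ contribution per round.

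The query-complexity part is unchanged from Theorem~\ref{TQ}: each round uses $Q_{f_t}$ to build $Q_F$ and its inverse, which is $O(1)$ queries. The main obstacle I anticipate is ensuring that the parameter choices for $r_t$ and $r'_t$ make both error pieces scale as $O(G^2/t)$ simultaneously, since the first error term depends only on the ratio $r'_t/r_t$ while the second depends on $r_t$ alone, so the bookkeeping has to pin down both the ratio (to control gradient accuracy) and the absolute size (to control the subgradient slack $2G\sqrt{n}r$ coming from sampling $z\in B_\infty(x,r)$) at the right $1/t$ rate. Once those two rates are verified, the final bound is the sum of three $O(G^2\log T)$ contributions, yielding the claimed regret with probability at least $1-T(\rho+p)$.
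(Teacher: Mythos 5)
Your proposal follows the paper's proof of Theorem~\ref{TAQ} essentially step for step: union bound over the $T$ rounds, Lemma~\ref{alphaS} applied at $y=x^*$, the Pythagorean/projection inequality, exact cancellation of the $-\frac{\alpha}{2}\|x_t-x^*\|^2$ terms because $\frac{1}{\eta_t}-\frac{1}{\eta_{t-1}}=\alpha$ for $\eta_t=\frac{1}{\alpha t}$, the bound $\|\widetilde{\nabla} f_t(x_t)\|\le 2G$ via the ratio $r'_t/r_t$, and the remaining error sums giving $O(G^2\log T)$. The bookkeeping tension you flag is real but is inherited from the paper itself: with the stated parameters the ratio $r'_t/r_t=\frac{\rho^2 p}{8\pi n^4(n+\rho)}$ is constant in $t$, so the first error piece $\frac{8\pi n^4(n+\rho)DGr'_t}{\rho^2 p r_t}$ equals $DG$ per round rather than the $\frac{G^2}{2t}$ the paper writes (evidently a typo in $r'_t$, which needs an extra decaying factor), and your claim that ``a similar substitution handles the first piece'' reproduces exactly the same slip as the published argument.
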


\begin{proof}
    Inequality (\ref{SGBAL}) is required to hold for all T rounds. Let $B_t$ be the event that Algorithm \ref{QOCO} fails to satisfy Inequality (\ref{SGBAL}) in the $t$-th round.
    First, set the failure rate of each round to be the same, specifically, equal to $p+\rho$.
    Then by Lemma \ref{alphaS}, we have $\Pr(B_1)=\Pr(B_2)=\dots=\Pr(B_T)\leq p+\rho$. By the union bound (that is, for any finite or countable event set, the probability that at least one of the events happens is no greater than the sum of the probabilities of the events in the set), we have $\Pr(\cup_{t=1}^T B_t) \leq \sum_{t=1}^{T} \Pr(B_t) \leq T(p+\rho)$. Namely, the probability that Algorithm \ref{QOCO} fails to satisfy Inequality (\ref{SGBAL}) at least one round is less than $T(\rho+p)$, which means the probability that Algorithm \ref{QOCO} succeeds for all $T$ round is greater than $1-T(\rho+p)$.
    Let $x^* \in \arg\min_{x\in \mathcal{K}} \sum_{t=1}^{T}f_t(x)$. By Lemma \ref{alphaS}, for the fixed $y=x^*$, with probability $1-T(\rho+p)$ we have
    \begin{align}
        \label{T5P1}
        f_t(x_t) - f_t(x^*)  \leq & \widetilde{\nabla} f_t(x_t)^{\mathrm{T}}(x_t-x^*) + \frac{8 \pi n^4 (n +\rho) DG r^{\prime}_t}{\rho^2 p r_t}  + (2G\sqrt{n}+\alpha nD)r_t \nonumber \\
        & - \frac{\alpha}{2}\|x_t-x^*\|^2.
    \end{align}
    By the update rule for $x_{t+1}$ and the Pythagorean theorem, there is
    \begin{align}
        \label{T5P2}
        \|x_{t+1}-x^*\|^2 & = \|\hat{P}_{\mathcal{K}}(x_t-\eta_t\widetilde{\nabla} f_t(x_t))-x^*\|^2  \nonumber\\
        & \leq \|x_t-\eta_t \widetilde{\nabla} f_t(x_t) - x^*\|^2 \nonumber\\
        & = \|x_t-x^* \|^2 + \eta_t^2 \|\widetilde{\nabla} f_t(x_t)\|^2  -2\eta_t \widetilde{\nabla} f_t(x_t)^{\mathrm{T}}(x_t-x^*).
    \end{align}
    Hence
    \begin{align}
        \label{T5P3}
        \widetilde{\nabla} f_t(x_t)^{\mathrm{T}}(x_t-x^*)  \leq & \frac{\|x_t-x^* \|^2 - \|x_{t+1}-x^*\|^2}{2\eta_t}  + \frac{\eta_t \|\widetilde{\nabla} f_t(x_t)\|^2}{2}.
    \end{align}
    Substituting Inequality (\ref{T5P3}) into Inequality (\ref{T5P1}) and summing Inequality (\ref{T5P1}) from $t=1$ to $T$, we have
    \begin{align}
        \label{T5P4}
        & \quad \sum_{t=1}^{T}(f_t(x_t)-f_t(x^*)) \nonumber \\
        & \leq \sum_{t=1}^{T}(\widetilde{\nabla} f_t(x_t)^{\mathrm{T}}(x_t-x^*) + \frac{8 \pi n^4 (n +\rho) DG r^{\prime}_t}{\rho^2 p r_t} + (2G\sqrt{n}+\alpha nD)r_t - \frac{\alpha}{2}\|x_t-x^*\|^2)  \nonumber \\
        & \leq \sum_{t=1}^{T}(\frac{\|x_t-x^* \|^2 - \|x_{t+1}-x^*\|^2}{2\eta_t} + \frac{\eta_t \|\widetilde{\nabla} f_t(x_t)\|^2}{2}  + \frac{8 \pi n^4 (n +\rho) DG r^{\prime}_t}{\rho^2 p r_t} \nonumber \\
        & \quad + (2G\sqrt{n}+\alpha nD)r_t - \frac{\alpha}{2}\|x_t-x^*\|^2).
    \end{align}
   In the right side of the above inequality, for the first term and the last term, we have
    \begin{align}
        \label{T5E5}
         &\sum_{t=1}^{T}(\frac{\|x_t-x^* \|^2 - \|x_{t+1}-x^*\|^2}{2\eta_t}- \frac{\alpha}{2}\|x_t-x^*\|^2)  \nonumber \\
         \leq & \frac{1}{2} \sum_{t=2}^{T} (\|x_t-x^* \|^2(\frac{1}{\eta_t}-\frac{1}{\eta_{t-1}}-\alpha)) + \|x_t-x^* \|^2 (\frac{1}{2\eta_1}-\frac{\alpha}{2}) \nonumber \\
         \leq & \frac{D^2}{2} \sum_{t=2}^{T}(\frac{1}{\eta_t}-\frac{1}{\eta_{t-1}}-\alpha) + D^2(\frac{1}{2\eta_1}-\frac{\alpha}{2})\nonumber \\
         = & \frac{D^2}{2}(\frac{1}{\eta_T}-\alpha T).
    \end{align}
    The handing of the second term is the same as Equation (\ref{T2E6}). Setting $\eta_t=\frac{1}{\alpha t}$, $r_t=\frac{G^2}{t(2G\sqrt{n}+\alpha nD)}, r^{\prime}_t=\frac{G^2\rho^2 p}{8 \pi tn^4 (n +\rho)(2G\sqrt{n}+\alpha nD)}$, we have
    \begin{align}
         \sum_{t=1}^{T}(f_t(x_t)-f_t(x^*))
        & \leq \frac{D^2}{2}(\alpha T-\alpha T) + \sum_{t=1}^T\frac{(G+G)^2}{2\alpha t} + \sum_{t=1}^T\frac{G^2}{2 t} + \sum_{t=1}^T \frac{G^2}{t} \nonumber\\
        & \leq \frac{2G^2}{\alpha}\log{T} + \frac{G^2\log{T}}{2} + G^2\log{T} \nonumber\\
        & = O(G^2\log{T}).
    \end{align}
   Hence, the theorem follows.
\end{proof}

\section{Online convex optimization with classical zeroth-order oracles}
\label{C}

In this section, we first give a classical OCO algorithm using classical zeroth-order oracles, which is stated in Algorithm \ref{COCO}. Then after some technical lemmas, we analyze its performance and show how we choose the appropriate parameters to ensure that it performs well in Theorem \ref{TC}.

Here we give the classical OCO algorithm. For the OCO problem stated in Subsection \ref{secOCO} and the setting stated in Subsection \ref{secCon}, given the total horizon $T$ and $\delta$, we present Algorithm \ref{COCO} to produce a decision sequence $x_1,x_2,x_3,\dots,x_T$ for the player, such that it achieves a regret being sublinear of $T$, with probability greater than $1-\delta$. Initially, the algorithm chooses $x_1$ randomly from $\mathcal{K}$, and then sequentially produces $x_2,x_2,x_3,\dots,x_T$ by online gradient descent.
Steps 5-8 are the process of the finite difference method.
Step 6 is the process of evaluating the partial derivative.
The projection operation in Step 8 is defined as $\hat{P}_{\mathcal{K}}(y) \triangleq  \mathop{\arg\min_{x \in \mathcal{K}}} \|x - y\|$; $B_{\infty}(x,r)$ is the ball in $L_{\infty}$ norm with radius $r$ and center $x$.

\begin{algorithm}[H]
    \caption{Classical online subgradient descent (COSGD)}
    \label{COCO}
    \begin{algorithmic}[1]
        \REQUIRE Step sizes $\{\eta_t\}$, parameters $\{r_t\}, \{r^{\prime}_t\}$
        \ENSURE $x_2, x_3, \dots x_T$
        \STATE Choose the initial point $x_1\in \mathcal{K}$ randomly.
        \FOR{$t=1$ to $T$}
            \STATE play $x_t$, get the oracle of loss function $O_{f_t}$.
            \STATE Sample $z\in B_{\infty}(x_t,r_t)$.
            \FOR{$j=1$ to $n$}
                \STATE $\nabla_j^{(r^{\prime}_t)}f_t(z)=\cfrac{O_{f_t}(z+r^{\prime}_te_j)-O_{f_t}(z-r^{\prime}_te_j)}{2r^{\prime}_t}$;
            \ENDFOR
            \STATE $\widetilde{\nabla} f_t(x_t)=\left(\nabla_1^{(r^{\prime}_t)}f_t(z),\nabla_2^{(r^{\prime}_t)}f_t(z), \dots,\nabla_n^{(r^{\prime}_t)}f_t(z)\right)$.
            \STATE update $x_{t+1}=\hat{P}_{\mathcal{K}}(x_t-\eta_t \widetilde{\nabla} f_t(x_t))$.
        \ENDFOR
    \end{algorithmic}
\end{algorithm}

First we analyze the query complexity of Algorithm \ref{COCO}. In each round, it needs to call the oracle twice to compute each partial derivative, so totally $2n$ times for computing the gradient. Thus, $O(n)$ times for each round.
Next, we show that Algorithm \ref{COCO} guarantees $O(DG\sqrt{T})$ regret for all $T\geq 1$ under the setting of our paper.

The evaluating error of gradient of $f_t$ at point $z$ in each round can be bounded as shown in Lemma \ref{FDMB}, of which the proof can be found in Lemma 10 and Lemma 11 of \cite{DeWolf18}.

\begin{restatable}{mylem}{FDMB}
    \label{FDMB}
    {\rm (Lemma 10 and 11 of \cite{DeWolf18})}
    In Algorithm \ref{COCO}, for all timestep $t$, let $f_t:\mathcal{K} \to \mathbb{R}$ be convex with Lipschitz parameter $G$. For any $r_t\geq r^{\prime}_t > 0$, the estimated gradient $\widetilde{\nabla} f_t(x_t)$ satisfies
    \begin{align}
        \mathbb{E}_{z\in B_{\infty}(x_t,r_t)}\|\nabla f_t(z)-\widetilde{\nabla} f_t(x_t)\|_1 \leq \frac{nGr^{\prime}_t}{2r_t}.
    \end{align}
\end{restatable}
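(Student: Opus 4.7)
The plan is to establish a per-coordinate bound $\mathbb{E}_{z\in B_\infty(x,r)}|\partial_j f(z) - \nabla_j^{(r')} f(z)| \leq Gr'/(2r)$ for each $j\in\{1,\ldots,n\}$, and then sum over $j$ to obtain the stated $\ell_1$ inequality. The central idea is that for convex $f$ the central-difference error has a sign structure which, after averaging over $z$, lets a one-dimensional integral telescope to boundary terms, and Lipschitz continuity controls what remains.

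First I would rewrite the error via the fundamental theorem of calculus applied to $t\mapsto f(z+te_j)$:
\begin{equation*}
\nabla_j^{(r')} f(z) - \partial_j f(z) = \frac{1}{2r'}\int_{-r'}^{r'}\bigl[\partial_j f(z+te_j) - \partial_j f(z)\bigr]\,dt.
\end{equation*}
The restriction of $f$ to the line $z+\mathbb{R}e_j$ is convex, so $t\mapsto\partial_j f(z+te_j)$ is nondecreasing; thus the integrand $g_{z,j}(t) := \partial_j f(z+te_j) - \partial_j f(z)$ has the same sign as $t$ and vanishes at $t=0$. Pulling the absolute value inside and applying Fubini, the task reduces to bounding $\mathbb{E}_{z\in B_\infty(x,r)}|g_{z,j}(t)|$ uniformly in $t\in[-r',r']$.

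Next, for fixed $t$ I would split $z = (z_j, z_{-j})$ and integrate over $z_j\in[x_j-r,x_j+r]$ first with $z_{-j}$ held fixed. The sign property above lets me drop the absolute value up to an overall $\mathrm{sign}(t)$ factor, after which the inner integral telescopes to
\begin{equation*}
\bigl[f(x_j+r+t,z_{-j}) - f(x_j+r,z_{-j})\bigr] - \bigl[f(x_j-r+t,z_{-j}) - f(x_j-r,z_{-j})\bigr].
\end{equation*}
Each of the two surviving one-dimensional increments is bounded in absolute value by $G|t|$ by Lipschitz continuity, so after integrating over $z_{-j}$ (contributing a factor $(2r)^{n-1}$) and dividing by the volume $(2r)^n$ of $B_\infty(x,r)$, I obtain $\mathbb{E}_z|g_{z,j}(t)| \leq G|t|/r$.

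Substituting back yields $\mathbb{E}_z|\partial_j f(z) - \nabla_j^{(r')}f(z)| \leq \frac{1}{2r'}\int_{-r'}^{r'}\frac{G|t|}{r}\,dt = \frac{Gr'}{2r}$, and summing over $j$ finishes the proof. The only real care point is regularity: $f$ is only convex and $G$-Lipschitz, not $C^1$, so $\partial_j f$ should be interpreted as a measurable selection from the subdifferential, which is permissible since convex Lipschitz functions are differentiable almost everywhere and all identities above are integral identities. The hypothesis $r\geq r'$ enters implicitly to ensure that the queries $z\pm r'e_j$ lie in the region where $f$ is defined and $G$-Lipschitz, so the boundary bounds apply without modification.
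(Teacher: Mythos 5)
Your proof is correct: the fundamental-theorem identity, the sign structure of $g_{z,j}(t)$ from monotonicity of the directional derivative along $z+\mathbb{R}e_j$, the telescoping of the inner $z_j$-integral to two increments each bounded by $G|t|$, and the final computation $\frac{1}{2r'}\int_{-r'}^{r'}\frac{G|t|}{r}\,dt=\frac{Gr'}{2r}$ per coordinate all check out, as do your remarks on almost-everywhere differentiability and the role of $r\geq r'$. The paper itself gives no proof of this lemma, deferring entirely to Lemmas 10 and 11 of \cite{DeWolf18}, and your argument is essentially that cited proof: a one-dimensional bound exploiting monotonicity of the derivative of a convex function, tensorized coordinatewise via Fubini and summed to get the $\ell_1$ statement.
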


The evaluating error of subgradient of $f_t$ at point $x_t$ in each round can also be bounded as follows, with proof given in  \ref{app_proofs}.

\begin{restatable}{mylem}{CSGB}
    \label{CSGB}
    In Algorithm \ref{COCO}, for all timestep $t$, let $f_t:\mathcal{K} \to \mathbb{R}$ be convex with Lipschitz parameter $G$, where $\mathcal{K}$ is a convex set with diameter $D$, then for any $y\in \mathcal{K}$, with probability greater than $1-\rho$, the estimated gradient $\widetilde{\nabla} f_t(x_t)$ satisfies
    \begin{align}
        \label{CSGB1}
        f_t(y)\geq f_t(x_t)+\widetilde{\nabla} f_t(x_t)^{\mathrm{T}}(y-x_t)-\frac{nGr_t^{\prime}D}{2\rho r_t}-2G\sqrt{n}r_t.
    \end{align}
\end{restatable}

Now we give the regret bound of Algorithm \ref{COCO}, with proof given in  \ref{app_proofs}.

\begin{restatable}{mythe}{TC}
    \label{TC}
    Algorithm \ref{COCO} with parameters $\eta_t=\frac{D}{G\sqrt{t}}, r_t=\frac{1}{\sqrt{tn}}, r^{\prime}_t=\frac{\delta}{T\sqrt{tn^3}}$ can achieve the regret bound $O(DG\sqrt{T})$,
    with probability greater than $1-\delta$, and its query complexity is $O(n)$ in each round.
\end{restatable}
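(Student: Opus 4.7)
My plan is to mirror the proof of Theorem \ref{TQ}, substituting the classical subgradient estimate of Lemma \ref{CSGB} for its quantum counterpart (Lemma \ref{QSGB}). The argument will have three stages: a high-probability control on the per-round subgradient error via a union bound, a standard projected online gradient descent regret decomposition, and a calculation verifying that the stated parameter choices make every summand $O(DG\sqrt{T})$.

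First I would apply Lemma \ref{CSGB} in each round with the per-round failure probability set to $\rho = \delta/T$. This yields, for each fixed $t$ and any $y \in \mathcal{K}$, the inequality
\[
f_t(y) \geq f_t(x_t) + \widetilde{\nabla} f_t(x_t)^{\mathrm{T}}(y-x_t) - \frac{nGr'_tD}{2\rho\, r_t} - 2G\sqrt{n}\,r_t
\]
with probability at least $1-\delta/T$. A union bound over $t=1,\ldots,T$ ensures that all $T$ such inequalities hold simultaneously with probability at least $1-\delta$. Conditioning on this good event and instantiating each inequality at $y = x^* \in \arg\min_{x\in\mathcal{K}} \sum_t f_t(x)$ gives a per-round bound on $f_t(x_t) - f_t(x^*)$.

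Next I would carry out the projected online gradient descent analysis exactly as in Theorem \ref{TQ}. Since the projection onto $\mathcal{K}$ is non-expansive, expanding $\|x_{t+1}-x^*\|^2$ via the update rule yields
\[
\widetilde{\nabla} f_t(x_t)^{\mathrm{T}}(x_t-x^*) \leq \frac{\|x_t-x^*\|^2 - \|x_{t+1}-x^*\|^2}{2\eta_t} + \frac{\eta_t \|\widetilde{\nabla} f_t(x_t)\|^2}{2}.
\]
Summing and telescoping (using $\|x_t-x^*\|\leq D$ and that $1/\eta_t$ is non-decreasing with the convention $1/\eta_0 := 0$) collapses the first group of terms into $D^2/(2\eta_T)$, after which the regret is bounded by the sum of four contributions: the telescoping term, the gradient-norm-squared term, and the two error terms from Lemma \ref{CSGB}.

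Finally I would plug in $\eta_t = D/(G\sqrt{t})$, $r_t = 1/(\sqrt{tn})$, $r'_t = \delta/(T\sqrt{tn^3})$, $\rho = \delta/T$, and verify that each of the four sums is $O(DG\sqrt{T})$. The telescoping term is $D^2/(2\eta_T) = DG\sqrt{T}/2$; the sampling-radius sum collapses to $\sum_t 2G/\sqrt{t} = O(G\sqrt{T})$; the gradient-norm-squared sum is handled using the crude bound $\|\widetilde{\nabla} f_t(x_t)\| \leq G + \|\widetilde{\nabla} f_t(x_t) - \nabla f_t(z)\|_1 = O(G)$, which holds under the good event and gives $O(DG\sqrt{T})$; and the finite-difference error sum $\sum_t nGr'_tD/(2\rho\, r_t)$ likewise telescopes to $O(DG\sqrt{T})$ once the factor $1/\rho = T/\delta$ coming from the Markov step in Lemma \ref{CSGB} is canceled by the $\delta/T$ built into $r'_t$. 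The $O(n)$ query complexity is immediate from Algorithm \ref{COCO}, since each round performs $2n$ oracle calls to form the coordinate-wise finite differences. The main obstacle is purely the bookkeeping of the parameter trade-off: $r_t$ must be small enough that the $2G\sqrt{n}\,r_t$ term stays within the $\sqrt{T}$ budget while remaining large enough (relative to $r'_t$ and $\rho$) that the Markov factor $1/\rho = T/\delta$ in the finite-difference error is absorbed without introducing any explicit $n$ in the final regret.
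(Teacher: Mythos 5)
Your proposal is correct and follows the paper's own proof of Theorem \ref{TC} essentially step for step: Lemma \ref{CSGB} applied with $\rho=\delta/T$ plus a union bound over the $T$ rounds, the projected online gradient descent decomposition via the Pythagorean theorem with the telescoping and gradient-norm bounds reused from (\ref{T2E5})--(\ref{T2E6}), the same four-sum parameter check, and the $O(n)$ query count read directly off Algorithm \ref{COCO}. One caveat you share with the paper: with the literally stated $r^{\prime}_t=\delta/(T\sqrt{tn^3})$ the finite-difference error term $nGr^{\prime}_tD/(2\rho r_t)$ evaluates to the constant $DG/2$ in every round (so that sum is $\Theta(DGT)$, not $O(DG\sqrt{T})$); both your argument and the paper's displayed bound $\sum_t DG/(2\sqrt{t})$ implicitly require $r^{\prime}_t \propto 1/t$ (e.g.\ $r^{\prime}_t=\delta/(Tt\sqrt{n^3})$), which is an inherited typo in the stated parameters rather than a defect of your approach.
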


\section{Conclusion}

In this paper, we considered the multi-points bandit feedback setting for online convex optimization against the completely adaptive adversary.
We provided a quantum algorithm and proved that it can achieve $O(\sqrt{T})$ regret where only $O(1)$ queries were needed in each round. We further showed that the algorithm can achieve $O(\log T)$ regret for $\alpha$-strongly convex loss functions by choosing different parameters.
These results showed that the quantum zeroth-order oracle is as powerful as the classical first-order oracle because the quantum case achieves the regret lower bound of the classical one, in the same query complexity.
Furthermore, our results showed that the quantum computing outperforms classical computing in this setting because, with $O(1)$ queries in each round, the state-of-art classical algorithm against weaker adversary only achieved $O(\sqrt{nT})$ regret.

This work leaves some open questions for future investigation: Can quantum algorithms achieve better regret bound if quantum first-order oracles are available? Is there any quantum algorithm which can achieve $O(\sqrt{T})$ regret with only $1$ query in each round?
Furthermore, the regret lower bound of classical $O(1)$ queries methods is still needed to be proved to show the quantum advantage rigorously.
It is also interesting to discuss some special cases of online convex optimization such as projection-free setting and constraint setting.

\section*{Acknowledgments}

This work was supported by the National Natural Science Foundation of China (Grant No. 61772565), the Basic and Applied Basic Research Foundation of Guangdong Province (Grant No. 2020B1515020050), the Key Research and Development Project of Guangdong Province (Grant No. 2018B030325001).

\section*{References}

\appendix

\section{Notations and definitions}
\label{appDef}

In this appendix, we list some additional definitions for the reader's benefit. The definition or explanation of the notations used in this paper are listed in Table \ref{notation}.

\begin{table}[ht]
    \renewcommand\arraystretch{1.1}
    \begin{tabular}{c|p{12cm}}
        Notations & Definition or explanation \\
        \hline
        $\mathcal{K}$ & The convex feasible set. $\mathcal{K}\subseteq\mathbb{R}^n$. \\
        \hline
        $T$ & The total horizon. \\
        \hline
        $D$ & The upper bound of the diameter of the feasible set $\mathcal{K}$, that is,  $\forall{x,y \in \mathcal{K}}, \|x-y\|_2 \leq D$. \\
        \hline
        $G$ & The Lipschitz parameters of any possible loss function $f_t$, that is, $\forall x, y\in\mathcal{K}, |f_t(x)-f_t(y)| \leqslant G \|y-x\|$. \\
        \hline
        $\nabla f(x)$ & The gradient of $f$ at point $x$. \\
        \hline
        $\hat{P}_{\mathcal{K}}$ & The projection operation. $\hat{P}_{\mathcal{K}}(y) \triangleq  \mathop{\arg\min_{x \in \mathcal{K}}} \|x - y\|$. \\
        \hline
        $x_t$ & The decision made by the player in timestep $t$. \\
        \hline
        $f_t(x_t)$ & The loss suffered by the player in timestep $t$. \\
        \hline
        $Q_{f_t}$ & The feedback oracle got by the player from the adversary in timestep $t$. \\
        \hline
        $Q_{F_t}$ & The quantum circuit constructed by using $Q_{f_t}$ in timestep $t$. \\
        \hline
        $\widetilde{\nabla} f_t(x_t)$ & The gradient estimated by the quantum part in timestep $t$.\\
        \hline
        $\eta_t, r_t, r_t^{\prime}$ & The parameters to be chosen in timestep $t$.\\
        \hline
        $b,c$ & The number of qubits required in the registers, which are determined by the dimension $n$ and the success rate we set.\\
        \hline
        $p,\rho$ & The failure rates of the two sub-modules respectively. They can be controlled by the player and together with $T$ they will affect the final success rate. The higher the success rate it sets, the more qubits it needs.\\
    \end{tabular}
    \caption{Definition or explanation of notations used in this paper.}
    \label{notation}
\end{table}

\begin{mydef}[Convex set]
    A set $\mathcal{K}$ is convex if for any $x,y\in \mathcal{K}$,
    \begin{equation}
        \forall \theta\in[0,1], \theta x+(1-\theta)y\in\mathcal{K}.
    \end{equation}
\end{mydef}
\begin{mydef}[Convex functions]
    A function $f:\mathcal{K}\to \mathbb{R}$ is convex if for any $x,y\in\mathcal{K}$
    \begin{equation}
        \forall \theta\in[0,1], f((1-\theta)x+\theta y)\leq(1-\theta)f(x)+\theta f(y).
    \end{equation}
\end{mydef}
\begin{mydef}[$\beta$-smooth convexity]
    A function $f:\mathcal{K}\to \mathbb{R}$ is $\beta$-smooth if for any $x,y\in \mathcal{K}$,
    \begin{equation}
        f(y)\leq f(x)+\nabla f(x)^{\mathrm{T}}(y-x)+\frac{\beta}{2}\norm{y-x}^2.
    \end{equation}
\end{mydef}
\begin{mydef}[$\alpha$-strong convexity]
    A function $f:\mathcal{K}\to \mathbb{R}$ is $\alpha$-strongly convex if for any $x,y\in \mathcal{K}$,
    \begin{equation}
        f(y)\geq f(x)+\nabla f(x)^{\mathrm{T}}(y-x)+\frac{\alpha}{2}\norm{y-x}^2.
    \end{equation}
\end{mydef}
If the function is twice differentiable, the above conditions for $\beta$-smooth convexity and $\alpha$-strong convexity are equivalent to the following condition on the Hessian of $f$, denoted $\nabla^2 f(x)$:
\begin{equation}
    \alpha I \preceq \nabla^2 f(x) \preceq \beta I,
\end{equation}
where $A\preceq B$ if the matrix $B-A$ is positive semidefinite.
\begin{mydef}[Norm ball]
    The ball of radius $r>0$ in $L_p$ norm centered at $x\in \mathbb{R}^n$ is defined to be $B_p(x,r):=\{y\in \mathbb{R}^n| \norm{x-y}_p\leq r\}$.
\end{mydef}
\begin{mydef}[Subgradient]
    The subgradient for a function $f:\mathcal{K}\to \mathbb{R}$ at $x\in \mathcal{K}$ is defined to be any member of the set of vectors $\{\nabla f(x)\}$ that satisfies
    \begin{equation}
        \forall y\in \mathcal{K}, f(y)\geq f(x) + \nabla f(x)^{\mathrm{T}}(y-x).
    \end{equation}
\end{mydef}

\section{Proof of Lemmas and Theorems}
\label{app_proofs}

In this appendix, we give the proofs of lemmas and theorems mentioned in the text. Note that we omit the subscript $t$ in the proofs of the lemmas as they hold for each timestep $t$.

\QGB*

\begin{proof}
    The states after Step $5$ will be:
    \begin{align}
        \frac{1}{\sqrt{2^n}}\sum_{a\in\{0,1,\dots,2^n-1\}}e^{\frac{2\pi i a}{2^n}} \ket{0^{\otimes b},0^{\otimes b},\dots,0^{\otimes b}} \ket{0^{\otimes c}} \ket{a}.
    \end{align}
    After Step $6$:
    \begin{align}
       \frac{1}{\sqrt{2^{bn+c}}}\sum_{u_1,u_2,\dots,u_n\in\{0,1,\dots,2^b-1\}}
       \sum_{a\in\{0,1,\dots,2^c-1\}}
       e^{\frac{2\pi i a}{2^n}}
       \ket{u_1,u_2,\dots,u_n}
       \ket{0^{\otimes c}} \ket{a}.
    \end{align}
    After Step $7$:
    \begin{align}
       \frac{1}{\sqrt{2^{bn+c}}}\sum_{u_1,u_2,\dots,u_n\in\{0,1,\dots,2^b-1\}}
       \sum_{a\in\{0,1,\dots,2^c-1\}}
       e^{\frac{2\pi i a}{2^n}}
       \ket{u_1,u_2,\dots,u_n}
       \ket{F(u)} \ket{a}.
    \end{align}
    After Step $8$:
    \begin{align}
       \frac{1}{\sqrt{2^{bn+c}}}\sum_{u_1,u_2,\dots,u_n\in\{0,1,\dots,2^b-1\}}
       \sum_{a\in\{0,1,\dots,2^c-1\}}
       e^{2\pi i F(u)}
       e^{\frac{2\pi i a}{2^n}}
       \ket{u_1,u_2,\dots,u_n}
       \ket{F(u)} \ket{a}.
    \end{align}
    After Step $9$:
    \begin{align}
       \frac{1}{\sqrt{2^{bn+c}}}\sum_{u_1,u_2,\dots,u_n\in\{0,1,\dots,2^b-1\}}
       \sum_{a\in\{0,1,\dots,2^c-1\}}
       e^{2\pi i F(u)}
       e^{\frac{2\pi i a}{2^n}}
       \ket{u_1,u_2,\dots,u_n}
       \ket{0^{\otimes c}} \ket{a}.
    \end{align}
    In the following, the last two registers will be omitted:
    \begin{align}
       \frac{1}{\sqrt{2^{bn}}}\sum_{u_1,u_2,\dots,u_n\in\{0,1,\dots,2^b-1\}}
       e^{2\pi i F(u)}
       \ket{u_1,u_2,\dots,u_n} .
    \end{align}
    And then we simply relabel the state by changing $u \to v=u-\frac{2^b}{2}$:
    \begin{align}
        \label{phi}
        \frac{1}{\sqrt{2^{bn}}}\sum_{v_1,v_2,\dots,v_n\in\{-2^{b-1},-2^{b-1}+1,\dots,2^{b-1}\}}
        e^{2\pi i F(v)}
        \ket{v}.
    \end{align}
    We denote Formula (\ref{phi}) as $\ket{\phi}$. Let $g=\nabla f(z)$, and consider the idealized state
    \begin{align}
        \label{psi}
        \ket{\psi}=\frac{1}{\sqrt{2^{bn}}}\sum_{v_1,v_2,\dots,v_n\in\{-2^{b-1},-2^{b-1}+1,\dots,2^{b-1}\}}
        e^{\frac{2\pi i g \cdot v}{2G}}
        \ket{v}.
    \end{align}
    After Step $10$, from the analysis of phase estimation \cite{Brassard00}:
    \begin{align}
        \Pr[\abs{\frac{Ng_i}{2G}-m_i}>e]<\frac{1}{2(e-1)}, \forall{i\in [n]}.
    \end{align}
    Let $e=n/\rho +1$, where $1\geq \rho>0$. We have
    \begin{align}
        \Pr[\abs{\frac{Ng_i}{2G}-m_i}>n/\rho +1]<\frac{\rho}{2n}, \forall{i\in [n]}.
    \end{align}
    Note that the difference in the probabilities of measurement on $\ket{\phi}$ and $\ket{\psi}$ can be bounded by the trace distance between the two density matrices:
    \begin{align}
        \|\dyad{\phi}{\phi}-\dyad{\psi}{\psi}\|_1
        =2\sqrt{1-|\braket{\phi}{\psi}|^2}
        \leq 2\|\ket{\phi}-\ket{\psi}\|.
    \end{align}
    Since $f$ is $\beta$-smooth, we have
    \begin{align}
        F(v) & \leq \frac{2^b}{2Gr^{\prime}}[f(z+\frac{r^{\prime} v}{N})-f(z)]+\frac{1}{2^{c+1}} \nonumber\\
        & \leq \frac{2^b}{2Gr^{\prime}}[\frac{r^{\prime}}{2^b} g \cdot v +\frac{\beta (r^{\prime} v)^2}{2^{2b}}]
        +\frac{1}{2^{c+1}} \nonumber\\
        & \leq \frac{g \cdot v}{2G}+\frac{2^b \beta r^{\prime} n}{4G}+\frac{1}{2^{c+1}}.
    \end{align}
    Then,
    \begin{align}
        \|\ket{\phi}-\ket{\psi}\|^2 & = \frac{1}{2^{bn}} \sum_v | e^{2 \pi i F(v)} - e^{\frac{2 \pi i g \cdot v}{2G}} |^2 \nonumber \\
        & \leq \frac{1}{2^{bn}} \sum_v | 2 \pi i F(v) - \frac{2 \pi i g \cdot v}{2G} |^2 \nonumber \\
        & \leq \frac{1}{2^{bn}} \sum_v 4 \pi^2 (\frac{2^b \beta r^{\prime} n}{4G}+\frac{1}{2^{c+1}})^2.
    \end{align}
    Set $b=\log_2 \frac{G\rho}{4\pi n^2 \beta r^{\prime}}$, $c=\log_2{\frac{4G}{2^b n \beta r^{\prime} }}-1$.  We have
    \begin{align}
        \|\ket{\phi}-\ket{\psi}\|^2 \leq \frac{\rho^2}{16n^2},
    \end{align}
    which implies $\|\dyad{\phi}{\phi}-\dyad{\psi}{\psi}\|_1 \leq \frac{\rho}{2n}$. Therefore, by the union bound,
    \begin{align}
        \Pr[\abs{\frac{2^b g_i}{2G}-m_i}>n/\rho +1]<\frac{\rho}{n}, \forall{i\in [n]}.
    \end{align}
    Furthermore, there is
    \begin{align}
        \Pr[\abs{g_i-\widetilde{\nabla}_i f(x)}>\frac{2G(n/\rho +1)}{2^b}]<\frac{\rho}{n}, \forall{i\in [n]},
    \end{align}
    as $b=\log_2 \frac{G\rho}{4\pi n^2 \beta r^{\prime}}$, we have
    \begin{align}
        \Pr[\abs{g_i-\widetilde{\nabla}_i f(x)}>8 \pi n^2 (n/\rho +1) \beta r^{\prime}/\rho]<\frac{\rho}{n}, \forall{i\in [n]}.
    \end{align}
    By the union bound, we have
    \begin{align}
        \Pr[\|{g-\widetilde{\nabla} f(x)}\|_1>8 \pi n^3 (n/\rho +1) \beta r^{\prime}/\rho]<\rho,
    \end{align}
    which gives the lemma.
\end{proof}

\SGB*
\begin{proof}
    Let $g=\nabla f(z)$.  For any $y \in \mathbb{R}^n$, by convexity and simple equivalent transformation, we have
    \begin{align}
         f(y) & \geq f(z)+<g,y-z> \nonumber \\
            & =f(z)+<g,y-z>+(\widetilde{\nabla} f(x)^{\mathrm{T}}(y-x) -\widetilde{\nabla} f(x)^{\mathrm{T}}(y-x))+(f(x)-f(x)) \nonumber \\
            & =f(x)+\widetilde{\nabla} f(x)^{\mathrm{T}}(y-x)+(g-\widetilde{\nabla} f(x))^{\mathrm{T}}(y-x)  +(f(z)-f(x))+g^{\mathrm{T}}(x-z) \nonumber \\
            & \geq f(x)+\widetilde{\nabla} f(x)^{\mathrm{T}}(y-x)-\|g-\widetilde{\nabla} f(x)\|_1 \|y-x\|_{\infty}  -G\|z-x\|_2+\|g\|_2 \|x-z\|_2 \nonumber \\
            & \geq f(x)+\widetilde{\nabla} f(x)^{\mathrm{T}}(y-x)-\|g-\widetilde{\nabla} f(x)\|_1 \|y-x\|_{\infty}  -2G\sqrt{n}r.
    \end{align}
\end{proof}

\Smooth*
\begin{proof}
    This lemma comes from Lemma 2.5 and 2.6 of the old version of \cite{Childs18}. They improved their proof in the published version, but the old version is enough for us. We rewrite it here for reader's benefit.
    First, we use the mollification of $f$, an infinitely differentiable convex function with the same Lipschitz parameter as $f$, to approximate $f$ with the approximated error much less than the truncation error, by choosing appropriate width of mollifier.
    Then from Inequality (2.21) of \cite{Childs18}, we have
    \begin{align}
        \mathbb{E}_{z\in B_{\infty}(x,r)}\text{Tr}(\nabla^2f(z)) \leq \frac{nG}{r}.
    \end{align}
    By Markov's inequality, we have
    \begin{align}
        \Pr_{z\in B_{\infty}(x,r)}[\Tr(\nabla^2 f(z)) & \geq \frac{nG}{pr}] \leq p.
    \end{align}
    We denote the set $\{y|\text{Tr}(\nabla^2f(y)) \leq \frac{nG}{p r}\}$ as $Y$, and denote the measure of $Y$ as $\mathcal{M}(Y)$.
    Consider $z\in B_{\infty}(x,r)$, the probability that $z\in Y \subseteq B_{\infty}(x,r)$ is $\mathcal{M}(Y)/(2r)^n$.
    Define $B_{\infty}(Y,r^{\prime}):=\{z|\exists y\in B_{\infty}(z,r^{\prime}), y\in Y\}$, from the union bound, we have
    \begin{align}
        \mathcal{M}(B_{\infty}(Y,r^{\prime})) & \leq \mathcal{M}(Y) + \mathcal{M}(Y) \mathcal{M}(B_{\infty}(x,r^{\prime})) \nonumber \\
        & = (1+(2r^{\prime})^n)\mathcal{M}(Y).
    \end{align}
    Then,
    \begin{align}
        & \Pr_{z\in B_{\infty}(x,r)}[\exists y\in B_{\infty}(z,r^{\prime}), \Tr(\nabla^2 f(y)) \geq  \frac{nG}{pr}] \nonumber \\
         = & \frac{\mathcal{M}(B_{\infty}(x,r^{\prime}))}{(2r)^n} \nonumber \\
         \leq & (1+(2r^{\prime})^n)\frac{\mathcal{M}(Y)}{(2r)^n} \nonumber \\
         = & (1+(2r^{\prime})^n) \Pr_{z\in B_{\infty}(x,r)}[z \in Y] \nonumber \\
         = & (1+(2r^{\prime})^n) \Pr_{z\in B_{\infty}(x,r)}[\text{Tr}(\nabla^2f(z)) \geq \frac{nG}{p r}] \nonumber \\
         \leq & (1+(2r^{\prime})^n)p.
    \end{align}
    Since $r^{\prime} \ll 1$ (see Theorem \ref{TQ} and \ref{TAQ}), we omit that term approximately, which gives the lemma.
\end{proof}

\QSGB*
\begin{proof}
    By lemma \ref{QGB}, we have
    \begin{align}
        \|{g-\widetilde{\nabla} f(x)}\|_1\leq 8 \pi n^3 (n/\rho +1) \beta r^{\prime}/\rho,
    \end{align}
    succeeded with probability greater than $1-\rho$.

    By lemma \ref{Smooth}, for any $z\in B_{\infty}(x,r)$, the probability of $\forall y\in B_{\infty}(z,r^{\prime}), \nabla^2 f(y) < \frac{nG}{pr} I$ is greater than $1-p$. Thus, by the condition of $\beta$-smooth convex (see \ref{appDef}), we set $\beta=\frac{nG}{p r}$.
    By the union bound, the probability of both success is greater than $1-(\rho+p)$. Combining with Lemma \ref{SGB}, we have
    \begin{align}
        f(y)\geq & f(x)+\widetilde{\nabla} f(x)^{\mathrm{T}}(y-x) -\frac{8 \pi n^4 (n +\rho) DG r^{\prime}}{\rho^2 p r}-2G\sqrt{n}r,
    \end{align}
    succeeded with probability greater than $1-(\rho+p)$.
\end{proof}

\alphaS*
\begin{proof}
    Let $g=\nabla f(z)$. For any $y \in \mathbb{R}^n$ and $z\in B_{\infty}(x,r)$, by strong convexity,
    \begin{align}
        f(y) & \geq f(z)+<g,y-z> + \frac{\alpha}{2}\|y-z\|^2.
    \end{align}
    For the last term in the right side, we have
    \begin{align}
        \frac{\alpha}{2}\|y-z\|^2 & =  \frac{\alpha}{2}\|(y-x)-(z-x)\|^2 \nonumber \\
        & \geq  \frac{\alpha}{2}(\|y-x\|-\|z-x\|)^2 \nonumber \\
        & = \frac{\alpha}{2}(\|y-x\|^2+\|z-x\|^2-2\|y-x\|\|z-x\|) \nonumber \\
        & \geq \frac{\alpha}{2}(\|y-x\|^2-2\sqrt{n}\|y-x\|_{\infty}\|z-x\|) \nonumber \\
        & \geq \frac{\alpha}{2}(\|y-x\|^2-2\sqrt{n}D\sqrt{n}r)\nonumber \\
        & = \frac{\alpha}{2}(\|y-x\|^2-2nDr).
    \end{align}
    For other terms, by the same technique as Lemma \ref{SGB}, we have
    \begin{align}
         f(y) & \geq f(z)+<g,y-z> + \frac{\alpha}{2}\|y-z\|^2 \nonumber \\
            & \geq f(x)+\widetilde{\nabla} f(x)^{\mathrm{T}}(y-x)-\|g-\widetilde{\nabla} f(x)\|_1 \|y-x\|_{\infty}  -2G\sqrt{n}r + \frac{\alpha}{2}\|y-z\|^2 \nonumber \\
            & \geq f(x)+\widetilde{\nabla} f(x)^{\mathrm{T}}(y-x)-\|g-\widetilde{\nabla} f(x)\|_1 \|y-x\|_{\infty}  -2G\sqrt{n}r \nonumber \\
            & \quad + \frac{\alpha}{2}(\|y-x\|^2-2nDr) \nonumber \\
            & \geq f(x)+\widetilde{\nabla} f(x)^{\mathrm{T}}(y-x)-\|g-\widetilde{\nabla} f(x)\|_1 D  -(2G\sqrt{n}+\alpha nD)r + \frac{\alpha}{2}\|y-x\|^2.
    \end{align}
    By lemma \ref{QGB}, we have
    \begin{align}
        \|{g-\widetilde{\nabla} f(x)}\|_1\leq 8 \pi n^3 (n/\rho +1) \beta r^{\prime}/\rho,
    \end{align}
    succeeded with probability greater than $1-\rho$.
    By lemma \ref{Smooth}, we have $\beta=\frac{nG}{p r}$ succeeded with probability greater than $1-p$. Then by the union bound, the probability of both success is greater than $1-(\rho+p)$. which we have,
    \begin{align}
        f(y) & \geq f(x)+\widetilde{\nabla} f(x)^{\mathrm{T}}(y-x)-\|g-\widetilde{\nabla} f(x)\|_1 D  -(2G\sqrt{n}+\alpha nD)r + \frac{\alpha}{2}\|y-x\|^2 \nonumber \\
        & \geq f(x)+\widetilde{\nabla} f(x)^{\mathrm{T}}(y-x)-\frac{8 \pi n^4 (n +\rho) DG r^{\prime}}{\rho^2 p r}  -(2G\sqrt{n}+\alpha nD)r + \frac{\alpha}{2}\|y-x\|^2,
    \end{align}
    succeeded with probability greater than $1-(\rho+p)$, which gives the lemma.
\end{proof}

\CSGB*
\begin{proof}
    By Lemma \ref{FDMB} and Markov's inequality, we have
    \begin{align}
        \text{Pr}[\|\nabla f(z)-\widetilde{\nabla} f(x)\|_1 \leq \frac{nGr^{\prime}}{2r\rho}]\geq 1-\rho.
    \end{align}
    Combining with Lemma \ref{SGB}, we have
    \begin{align}
        f(y)\geq f(x)+\widetilde{\nabla} f(x)^{\mathrm{T}}(y-x)-\frac{nGr^{\prime}D}{2\rho r}-2G\sqrt{n}r.
    \end{align}
    succeeded with probability greater than $1-\rho$.
\end{proof}

\TC*
\begin{proof}
    Inequality (\ref{CSGB1}) is required to hold for all T rounds, then by the union bound, the probability that Algorithm \ref{COCO} fails to satisfy Inequality (\ref{CSGB1}) at least one round is less than $T\rho$, which means the probability that Algorithm \ref{COCO} succeeds for all $T$ round is greater than $1-T\rho$.
    Let $x^* \in \arg\min_{x\in \mathcal{K}} \sum_{t=1}^{T}f_t(x)$. By Lemma \ref{CSGB}, setting $\rho=\frac{\delta}{T}$, for the fixed $y=x^*$, with probability greater than $1-\delta$, we have
    \begin{equation}
        \label{T2P1}
        f_t(x_t) - f_t(x^*) \leq \widetilde{\nabla} f_t(x_t)^{\mathrm{T}}(x_t-x^*) + \frac{TnGr^{\prime}_tD}{2\delta r_t} + 2G\sqrt{n}r_t.
    \end{equation}
    By the update rule for $x_{t+1}$ and the Pythagorean theorem, we get
    \begin{align}
        \label{T2P2}
        \|x_{t+1}-x^*\|^2 & = \|\hat{P}_{\mathcal{K}}(x_t-\eta_t \widetilde{\nabla} f_t(x_t))-x^*\|^2  \nonumber \\
        & \leq \|x_t-\eta_t \widetilde{\nabla} f_t(x_t) - x^*\|^2 \nonumber \\
        & = \|x_t-x^* \|^2 + \eta_t^2 \|\widetilde{\nabla} f_t(x_t)\|^2 -2\eta_t \widetilde{\nabla} f_t(x_t)^{\mathrm{T}}(x_t-x^*).
    \end{align}
    Hence
    \begin{align}
        \label{T2P3}
        \widetilde{\nabla} f_t(x_t)^{\mathrm{T}}(x_t-x^*) \leq & \frac{\|x_t-x^* \|^2 - \|x_{t+1}-x^*\|^2}{2\eta_t}  + \frac{\eta_t \|\widetilde{\nabla} f_t(x_t)\|^2}{2}.
    \end{align}
    Substituting Inequality (\ref{T2P3}) into Inequality (\ref{T2P1}) and summing Inequality (\ref{T2P1}) from $t=1$ to $T$, we have
    \begin{align}
        \label{T2P4}
        \sum_{t=1}^{T}(f_t(x_t)-f_t(x^*))
        & \leq \sum_{t=1}^{T}(\widetilde{\nabla} f_t(x_t)^{\mathrm{T}}(x_t-x^*) + \frac{TnGr^{\prime}_tD}{2\delta r_t} + 2G\sqrt{n}r_t) \nonumber \\
        &\leq \sum_{t=1}^{T}(\frac{\|x_t-x^* \|^2 - \|x_{t+1}-x^*\|^2}{2\eta_t} + \frac{\eta_t \|\widetilde{\nabla} f_t(x_t)\|^2}{2}  + \frac{TnGr^{\prime}_tD}{2\delta r_t} \nonumber \\
        & \quad + 2G\sqrt{n}r_t ) .
    \end{align}
    Upper bounds can be obtained for the right side of the above inequality.
    The handing of the first term and the second term are the same as Inequality (\ref{T2E5}) (\ref{T2E6}).
    Setting $\eta_t=\frac{D}{G\sqrt{t}}$, $r_t=\frac{1}{\sqrt{tn}}, r^{\prime}_t=\frac{\delta}{T\sqrt{tn^3}}$, we have
    \begin{align}
          \sum_{t=1}^{T}(f_t(x_t)-f_t(x^*)) & \leq \frac{1}{2}DG\sqrt{T}+ \sum_{t=1}^T\frac{D(G+G)^2}{2G\sqrt{t}}+ \sum_{t=1}^T\frac{DG}{2 \sqrt{t}} + \sum_{t=1}^T \frac{2G}{\sqrt{t}} \nonumber\\
         & \leq \frac{1}{2}DG\sqrt{T}+ \sum_{t=1}^T\frac{2DG}{\sqrt{t}}+ \frac{DG\sqrt{T}}{2} + 2G\sqrt{T} \nonumber\\
         & \leq \frac{1}{2}DG\sqrt{T}+ 2DG\sqrt{T}+ \frac{DG\sqrt{T}}{2} + 2G\sqrt{T} \nonumber\\
         & = O(DG\sqrt{T}).
    \end{align}
    Hence, the theorem follows.
\end{proof}

\end{document}